\newcommand{\N}{\mbox{N}}
\newcommand{\one}{1}
\def\half{\hbox{$1\over2$}}
\newcommand{\xtil}{\tilde{x}}
\newtheorem{theorem}{Theorem}
\newtheorem{proposition}[theorem]{Proposition}
\title{Nonparametric Bayesian testing for monotonicity}
\author{James G.~Scott, Thomas S.~Shively\footnote{McCombs School of Business, University of Texas at Austin; 2110 Speedway B6500, Austin, Texas, USA 78712; james.scott@mccombs.utexas.edu; tom.shively@mccombs.utexas.edu}
\\
and Stephen G.~Walker\footnote{Department of Mathematics, University of Texas at Austin; Austin, Texas, USA 78712; s.g.walker@math.utexas.edu  }
}
\date{April 2014}
\begin{document}

\begin{spacing}{1.1}

\maketitle

\begin{abstract}
This paper studies the problem of testing whether a function is monotone from a nonparametric Bayesian perspective.  Two new families of tests are constructed. The first uses constrained smoothing splines, together with a hierarchical stochastic-process prior that explicitly controls the prior probability of monotonicity. The second uses regression splines, together with two proposals for the prior over the regression coefficients. The finite-sample performance of the tests is shown via simulation to improve upon existing frequentist and Bayesian methods.  The asymptotic properties of the Bayes factor for comparing monotone versus non-monotone regression functions in a Gaussian model are also studied. Our results significantly extend those currently available, which chiefly focus on determining the dimension of a parametric linear model.

\vspace{0.5\baselineskip}

\noindent Keywords: Bayesian asymptotics; Model selection; Monotonicity; Regression splines; Smoothing splines

\end{abstract}

\newpage

\section{Introduction}

Many authors have used Bayesian methods for estimating functions that are known to have certain shape restrictions, including positivity, monotonicity, and convexity. Examples of work in this area include \citet{holmes:heard:2003}, \citet{neelon:dunson:2004}, \citet{dunson:2005}, \citet{shively:sager:walker:2009}, \citet{shively:walker:damien:2011}, and \citet{hannah:dunson:2012b}.  But an important question unaddressed by this work is whether it is appropriate to impose a specific shape constraint.  If it is, the resulting function estimates are often considerably better than those obtained using unconstrained methods \citep{cai:dunson:2007}. Conversely, the inappropriate imposition of constraints will result in poor function estimates.

This paper considers the use of nonparametric Bayesian methods to test for monotonicity.  The approach has connections with two major areas of the Bayesian literature.  First, we extend work on regression and smoothing splines \citep{smith:kohn:1996,dimatteo:etal:2001,pan:smith:2008,shively:sager:walker:2009}, and dictionary expansions more generally \citep{clyde:wolpert:2007}, adapting these tools in such a way that they become appropriate for hypothesis testing.  Second, we extend work on default Bayesian model selection \citep{bergerpericchi2001,georgefoster2000,clyde:george:2004,giron:etal:2010} to the realm of nonparametric priors for functional data.

Our main theoretical contribution is to characterize the asymptotic properties of the Bayes factor for comparing monotone versus non-monotone mean regression functions in a normal model. To date, such asymptotic analysis has focused on comparing normal linear models of differing dimension; see, for example, \citet{giron:etal:2010}.  We generalize this work to the problem of comparing non-linear mean functions.  We do so using a technique introduced in \citet{walker:hjort:2002} that combines the properties of maximum-likelihood estimators and Bayesian marginal likelihoods that involve the square roots of likelihood functions.

Our main methodological contribution is to construct two new families of priors that are appropriate in a Bayesian test for monotonicity.  The first approach uses a monotone version of smoothing splines, together with a hierarchical stochastic-process prior that allows explicit control over the time at which the underlying function's derivative first becomes negative. The second approach constructs a family of tests using a regression spline model \citep{smith:kohn:1996} with a mixture of constrained multivariate distributions as the prior for the regression coefficients.  We study two possible choices for this prior: constrained multivariate Gaussian distributions, and a constrained version of the multivariate non-local method-of-moments distribution \citep{johnson:rossell:2010}. The mixing parameters in both priors can be adjusted to allow the user to set the prior probability of a monotone function.

These approaches work for a very general class of sampling models, including those for continuous, binary, and count data.  They are most easily understood, however, in the case where data $Y = \{y_1, \ldots, y_n\}$, observed at ordered points $x \in \{x_1, \ldots, x_n\}$, are assumed to arise from a normal sampling model, $(y_i \mid f_i) \sim \N(f_i, \sigma^2)$.  Here $f_i = f(x_i)$ denotes the value at $x_i$  of an unobserved stochastic process $f(x)$, with $x \in \mathcal{X}$ and $f$ in a space  $\mathcal{F}$ of real-valued functions on $\mathcal{X}$.  Without loss of generality, we consider functions defined on $\mathcal{X} = [0,1]$. To test whether $f(x)$ is non-decreasing, one must compare the evidence for $H_1: f \in \mathcal{F}_1$ against $H_2: f \in \mathcal{F}_2$, where
\begin{eqnarray*}
\mathcal{F}_1 &=& \{f \in \mathcal{F} :  f(s_2) \geq f(s_1) \; \mbox{for all pairs} \; s_2 \geq s_1 \} \, ,\\
\mathcal{F}_2 &=& \{f \in \mathcal{F} :  f(s_2) < f(s_1) \;  \mbox{for at least one pair} \; s_2 \geq s_1 \} \, .
\end{eqnarray*}

Under the Bayesian approach to this problem, two important goals are in conflict.  One is flexibility: we wish to make few assumptions about $f(x)$, so that the testing procedure may accommodate a wide class of functions.  Applying this principle na\"ively would lead us to choose an encompassing function space $\mathcal{F}$ with large support, and vague prior measures $\Pi_1$ and $\Pi_2$ over $\mathcal{F}_1$ and $\mathcal{F}_2$, respectively.  On the other hand, the Bayes factor for comparing $H)1$ versus $H_2$ is
$$
\mbox{BF}(H_1 : H_2) = \frac{\int_{\mathcal{F}_1} \left\{ \prod_{i=1}^N \phi(y_i \mid f(x_i), \sigma^2) \right\} \ d \Pi_1(f)  }{\int_{\mathcal{F}_2} \left\{ \prod_{i=1}^N \phi(y_i \mid f(x_i), \sigma^2) \right\} \ d \Pi_2(f)  } \, ,
$$
which is heavily influenced by the dispersion of $\Pi_1$ and $\Pi_2$ \citep{Scott:2008}.  This strongly contra-indicates the use of noninformative priors for model selection.  In particular, improper priors may not be used, as this leaves the Bayes factor defined only up to an arbitrary multiplicative constant.

To see how these conflicting goals may be balanced, suppose initially that $\mathcal{F} = \mathcal{C}^1$, the space of all continuously differentiable functions.  Membership in $\mathcal{C}^1$ is easily enforced by supposing that $f(x) = \int_0^x g(s) \ ds$, where $g(s) \in \mathcal{C}$, the space of real-valued stochastic processes with almost-surely continuous sample paths.  The test may now be phrased in terms of $H_1: g(s) \in C^{+}$ versus $H_2: g(s) \in C^{-}$, where
$$
\mathcal{C}^+ = \{g \in \mathcal{C} :  g(s) > 0 \; \mbox{for all} \; s \in [0,1] \} \, , \quad \mathcal{C}^- =  \mathcal{C} \ \backslash \ \mathcal{C}^+  \, .
$$
The prior for $g(s)$ now determines the behavior of the resulting estimate.

There are two similar proposals to ours in the Bayesian literature on shape-constrained inference.  First, \citet{dunson:2005} used a Bonferroni-like method for controlling the overall probability of monotonicity via a semiparametric model akin to a variable-selection prior.  Second, \citet{salomond:2014} introduces an adaptive Bayes test for monotonicity based on locally constant functions.  There is also a large body of classical work on specification tests, many of which are appropriate for assessing monotocity or more general aspects of functional form.  We cite four examples here, which collectively represent a broad range of ideas about how to approach the testing problem; for a comprehensive bibliography, see \citet{akakpo:2014}.  \citet{zheng:1996} proposed a test of functional form based on the theory of U-statistics.  \citet{bowman:etal:1998} used a two-stage boostrap test for monotonicity based on the idea of a critical bandwidth parameter from an unconstrained estimate.  \citet{baraud:2005} devise a test based on comparing local means of consecutive blocks of observations.  \citet{akakpo:2014} propose a method based on the empirical distance between the cumulative sum of the data points and the least concave majorant of this running sum.  Our simulation study finds that the new approaches proposed here generally outperform these existing Bayesian and frequentist methods, in that they generally have better power to detect departures from monotonicity at a fixed false-positive rate.

\section{Constrained smoothing splines}
\label{sec:constrainedBM}

A straightforward nonparametric Bayesian strategy to estimate a continuous function $f(x)$ is to model the derivative $g(x) = f'(x)$ as a scaled Wiener process, implying that the increments $g(s_2) - g(s_1)$ are normally distributed with mean zero and variance $\tau^2 (s_2 - s_1)$.  But this is inappropriate for model selection, since it places unit prior probability on the hypothesis of non-monotonicity.  To see this, recall that for every $c>0$, the sample path of a Wiener process almost surely takes both positive and negative values on $(0,c)$, implying that $f$ will not be monotone.

However, we can adapt the smoothing-spline approach to the testing problem as follows.  Introduce $\xi = \inf_{s} \{ s: g(s)=0 \}$ as a latent variable that denotes the first downward crossing point of $g(x)$.  This maps directly onto the hypotheses of interest: $\xi \geq 1$ if and only if the derivative $g(s)$ is strictly positive on the unit interval.

The introduction of $\xi$ turns a hypothesis test for an infinite-dimensional object $f$ into a one-dimensional estimation problem, via a hierarchical model for $f(x)$:
\begin{eqnarray}
(y_i \mid f) &\sim& \N(f(x_i), \sigma^2) \,  , \quad f(x) = \int_0^{x} g(s) \ ds \, ,\nonumber\\
(g \mid \xi, \tau) &\sim& \Pi(g \mid \xi, \tau) \,  , \quad \xi \sim p(\xi) \, .
\end{eqnarray}
Here $\Pi(g \mid \xi, \tau)$ denotes the conditional probability distribution of a Wiener process with scale parameter $\tau$, given $\xi = \inf_{s} \{ s: g(s)=0 \}$.  This differs from the one-dimensional Brownian bridge, in that $g(x)$ is restricted to be positive over the interval $(0, \xi)$.

We refer to the overall approach as a constrained smoothing spline.  To test $H_1$ versus $H_2$, we compute the posterior probability of $H_1$ as $\hat p_1 = \mbox{pr}(\xi > 1  \mid Y)$.  Note that $\hat p_1/(1-\hat p_2)$ is identical to the Bayes factor in favor of $H_1$, as long as the prior distribution for $\xi$ has exactly half its mass on $(1, \infty)$.  One default choice for $p(\xi)$ for which this holds is a standard log-normal distribution, although in principle, any proper prior with a median of 1 could be used.   Moreover, even if $\mbox{pr}(\xi > 1) \neq 0.5$ a priori, it is easy to compute the Bayes factor by simply dividing $\hat p _1/(1-\hat p _1)$ by the prior odds ratio in favor of monotonicity.

As the data are observed on a discrete grid, it is necessary to characterize the increments of the above stochastic-process prior.  The following proposition describes the distribution of these increments, here referred to as the fractional normal distribution.  Our proof generalizes the argument of \citet{chigansky:klebaner:2008} to a scaled Wiener process.

\begin{proposition}
\label{prop:fracnorm}
Let $\Pi(g \mid \xi, \tau)$ denote the conditional probability measure of a scaled Wiener process $g(s)$ with scale parameter $\tau$, given the condition that the first downward crossing of $g(x)$ occurs at $x = \xi$, where conditioning is meant in the sense of Doob's $h$ transform.  Let $0 < s_1 < \xi$, and define $g_1 \equiv g(s_1)$ and $g_0 \equiv g(0)$.  The conditional distribution $p(g_1 \mid g_0, \xi, \tau)$ arising from $\Pi(g \mid \xi, \tau)$ is a fractional
normal distribution, denoted $\mbox{FN}(g_1 \mid g_0, \xi, \tau)$, with density
\begin{equation}
\label{fracnormdens}
p(g_1 \mid g_0, \xi, \tau) = \frac{\sqrt{2} e^{-m^2/2}}{mh^2\sqrt{\pi}} \ g_1 \sinh(m g_1/h) \ \exp\left\{- \frac{g_1^2}{2h^2}\right\} \, ,
\end{equation}
where $h = \tau \{\xi u (1-u)\}^{1/2}$ and $m = (g_0/\tau) \{(1-u)/(\xi u)\}^{1/2}$.
\end{proposition}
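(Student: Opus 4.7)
The plan is to realise Doob's $h$-transform explicitly via Bayes' rule applied to the first-passage time $T_0 = \inf\{s : g(s) = 0\}$, which for $g_0 > 0$ coincides with the first downward crossing. Since the event $\{T_0 = \xi\}$ has zero probability, the $h$-transform interpretation amounts to computing
\begin{equation*}
p(g_1 \mid g_0, \xi, \tau) \;=\; \lim_{d\xi \to 0} \frac{p\{g(s_1) = g_1,\; T_0 \in (\xi, \xi + d\xi) \mid g_0, \tau\}}{p\{T_0 \in (\xi, \xi + d\xi) \mid g_0, \tau\}}.
\end{equation*}
The denominator is the standard first-passage density of a scaled Wiener process, $g_0 (2\pi\tau^2 \xi^3)^{-1/2} \exp\{-g_0^2/(2\tau^2 \xi)\}$, so only the numerator requires work.

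For the numerator I would invoke the strong Markov property at time $s_1$, which factors the joint density into the absorbing-barrier transition density of $g$ started at $g_0$ conditioned not to hit zero before $s_1$, times the first-passage density from $g_1$ to $0$ in the remaining time $\xi - s_1$. The first factor is supplied by the reflection principle,
\begin{equation*}
q(s_1; g_0, g_1) \;=\; \frac{1}{\sqrt{2\pi\tau^2 s_1}} \left[ \exp\!\left\{-\frac{(g_1 - g_0)^2}{2\tau^2 s_1}\right\} - \exp\!\left\{-\frac{(g_1 + g_0)^2}{2\tau^2 s_1}\right\} \right],
\end{equation*}
and the second factor is $g_1 \{2\pi\tau^2 (\xi - s_1)^3\}^{-1/2} \exp\{-g_1^2/(2\tau^2 (\xi - s_1))\}$. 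The decisive manipulation is to rewrite the bracketed difference of exponentials as $2 \exp\{-(g_0^2 + g_1^2)/(2\tau^2 s_1)\} \sinh\{g_0 g_1/(\tau^2 s_1)\}$, which is precisely the source of the $\sinh$ factor in (\ref{fracnormdens}).

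The remainder is algebra. Setting $u = s_1/\xi$ and combining the three exponential factors with the cancellation from the denominator, the coefficient of $-g_1^2/2$ in the resulting exponent becomes $\{\tau^2 \xi u (1-u)\}^{-1} = h^{-2}$; the residual coefficient of $g_0^2$ becomes $(1-u)/(\tau^2 \xi u)$, yielding $-m^2/2$; the argument of $\sinh$ becomes $g_0 g_1/(\tau^2 \xi u) = m g_1/h$; and the leading constant collapses to $\sqrt{2}/(m h^2 \sqrt{\pi})$. The main obstacle is not conceptual but notational: Chigansky and Klebaner (2008) carry out the analogous calculation only for a standard Wiener process, so the work here is to propagate the scale $\tau$ cleanly through every Gaussian normalising constant and first-passage density so that the identifications of $h$ and $m$ drop out correctly.
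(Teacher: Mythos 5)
Your proposal is correct, but it takes a genuinely different route from the paper. The paper's proof realizes the conditioned process as the radial part of a three-dimensional Brownian bridge from $v$ (with $\Vert v\Vert = g_0$) to $0$ over $[0,\xi]$, and then obtains the density of $\Vert V(s_1)\Vert$ by convolving a noncentral $\chi^2_1$-type density with the exponential law of $z_2^2+z_3^2$. You instead write the Doob $h$-transform transition density explicitly as
$\bigl\{q(s_1;g_0,g_1)\,p_{T_0}(\xi-s_1\mid g_1)\bigr\}/p_{T_0}(\xi\mid g_0)$,
using the strong Markov property, the reflection-principle formula for the absorbed transition density, and the first-passage density. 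I have checked that your algebra closes: the $g_1^2$ coefficient collapses to $1/(2h^2)$, the residual $g_0^2$ terms give $e^{-m^2/2}$, the $\sinh$ argument $g_0g_1/(\tau^2 s_1)$ equals $mg_1/h$, and the constant reduces to $\sqrt{2}/(mh^2\sqrt{\pi})$ --- \emph{provided} $u=s_1/\xi$, which is in fact the only reading consistent with the stated formulas (the paper's proof writes $u=(\xi-s_1)/(\xi-s_0)$ and $g_1(1-u)$ in its Equation (4), which swap $u$ with $1-u$ and $g_0$ with $g_1$; your convention is the self-consistent one, e.g.\ it gives $mh\to g_0$ as $s_1\to 0$). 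The trade-off: your argument is more elementary and self-contained, needing only the reflection principle and the first-passage density, and it makes the $h$-transform structure explicit; the paper's bridge representation is a statement about the whole conditioned process (a Bessel(3) bridge), not just a one-time marginal, which is what supports the sequential propagation in the particle filter. Two minor points worth tightening: your first factor is the absorbed (sub-probability) transition density, not the density ``conditioned not to hit zero'' --- the formula you wrote is the correct one, but the description is loose; and since $\{T_0=\xi\}$ is null, a sentence justifying that the $d\xi\to 0$ limit of conditional densities coincides with the $h$-transform law would make the argument airtight.
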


\begin{proof}
Let $V$ be a three-dimensional Brownian bridge with scale $\tau$, with $V(0) = v$ and $V(\xi) = 0$, for some point $v$ having Euclidean norm $\Vert v \Vert = g_0$.  That is,
\begin{equation}
\label{eqn:3dbrownbridge}
V(s) = v + W(s)  - \frac{s}{\xi} \{ W(\xi) + v \} \quad \mbox{for} \quad s < \xi \, ,
\end{equation}
where $W(s)$ is a Wiener process in three dimensions with scale parameter $\tau$.  Observe that the random variable of interest is equal in distribution to the radial part of $V$, observed at $s_1$:
$$
(g_1 \mid \xi, g_0) \stackrel{D}{=} \Vert V(s_1) \Vert \, .
$$
Moreover, since $\Vert V \Vert$ is independent of the the starting point $v$, as long as $\Vert v \Vert = g_0$, one may choose $v = (g_0, 0, 0)$ to simplify the calculations.  From (\ref{eqn:3dbrownbridge}), this leads to
\begin{equation}
\label{eqn:bbridgenorm}
\Vert V(s_1) \Vert \stackrel{D}{=} \left\{   \left(  z_1 \tau \sqrt{ u (1-u) \xi } +g_1(1-u) \right)^2 + \xi u (1-u) \tau^2 z_2^2 + \xi u (1-u) \tau^2 z_3^2    \right\}^{1/2} \, ,
\end{equation}
where $u = (\xi-s_1)/(\xi-s_0)$, and where $(z_1, z_2, z_3)$ are independent, standard-normal draws.  Equivalently,
$$
\Vert V(s_1) \Vert \stackrel{D}{=} h \left\{   \left( z_1 + m \right)^2 +  z_2^2 + z_3^3    \right\}^{1/2} \, ,
$$
with $m$ and $h$ defined as above.  The density $z_2^2 + z_3^2$ is that of an exponential distribution with rate $1/2$.  Meanwhile, the density of the term $\eta = (z_1 + m)^2$  may be evaluated directly using the fact that $p(\eta) = \frac{d}{du} P(\eta < u)$ and differentiating under the integral sign.  After a simple change of variables, the density of $\theta = \left\{   \left( z_1 + m \right)^2 +  z_2^2 + z_3^3    \right\}^{1/2}$ may be computed via convolution as
$$
f_c(\theta) = \frac{\sqrt{2} e^{-m^2/2}}{m\sqrt{\pi}} \ \theta \sinh(m \theta) \ \exp(-\theta^2/2) \, .
$$
The result follows from the fact that the density of $g_1 = h \theta$ is $p(g_1) = f_c(\theta/h)/h$.
\end{proof}

Proposition \ref{prop:fracnorm} characterizes the distribution of the increments of $g(x)$ for a given $\xi$.  This leads to an efficient sequential Monte Carlo algorithm for fitting the implied state-space model for the unknown function $f(x)$ and crossing time $\xi$.  The details of the algorithm are in the technical appendix.

\section{Constrained regression splines}

This section develops a second family of tests for monotonicity using a regression spline model with non-standard prior distributions, calibrated specifically to the testing problem.  In principle, regression splines of any order may be used.  In practice, we use quadratic splines, because they lead to excellent finite-sample performance and a tractable sampling scheme.  Thus our finitely parametrized approximation to the function $f(x)$ is given by the expansion
\begin{equation}
\label{eqn:splinemodel}
f_m(x) = \alpha + \beta_1 x + \beta_2 x^2 + \beta_3 (x - \xtil_1)^2_+ +  \cdots + \beta_{m+2}(x -\xtil_{m})^2_+ \, ,
\end{equation}
where the $\xtil_j$ are the ordered knot locations, and $z_+ = \max(z, 0)$.   Throughout this section we assume the model $y_i = f_m(x_i) + \epsilon_i$ for some choice of $m$.  For notational convenience let $\xtil_0 = 0$ and $\xtil_{m+1} = 1$.   Clearly the first derivative $f'_m(x)$ is linear between each pair of knots.  Therefore, if $f'_m(\xtil_j) \geq 0$ and $f'_m(\xtil_{j+1}) \geq 0$, then $f'_m(x) \geq 0$ on $[\xtil_j, \xtil_{j+1}]$.  If this condition holds for all $j = 0, \ldots, m$, then $f_m(x)$ is monotone on $[0,1]$.

To develop a test for monotonicity based on regression splines, we generalize \cite{smith:kohn:1996} and \citet{shively:sager:walker:2009}, using Bayesian variable selection to choose the location of the knots from a set of $m$ pre-specified values.  Using all $m$ knots, (\ref{eqn:splinemodel}) may be re-written in matrix notation as $y = \alpha 1 + X \beta + \epsilon$, where $y$ is the $n\times 1$ vector of observations, $1$ is a vector of ones, $X$ is an $n \times (m+2)$ design matrix, and $\beta$ is the vector of spline coefficients.  Let $\iota$ be a vector of indicator variables whose $j$th element takes the value 0 if $\beta_j = 0$, and 1 otherwise.  Let $\beta_{\iota}$ consist of the elements of $\beta$ corresponding to those elements of $\iota$ that are equal to one, and let $p = |\iota|$ denote the number of nonzero entries in $\iota$.   \citet{shively:sager:walker:2009} derive the constraints on $\beta_{\iota}$ that ensure the monotonicity of $f_m(x)$ for a given $\iota$.  Specifically, $f_m(x)$ is monotone whenever the vector $\gamma_{\iota} \equiv L_{\iota} \beta_{\iota} \geq 0$, where $L_\iota$ is a known lower-triangular matrix that depends on $\iota$ and the $\xtil_j$'s.  Checking whether any element of $\gamma_{\iota}$ is negative, and controlling the prior probability of such a constraint violation across all possible values of $\iota$, are the basis of our test for monotonity.  The matrix $L_{\iota}$ therefore plays an important role in the prior distribution for $\beta_{\iota}$.

Given $\iota$, the $\beta_\iota$ space is divided into $2^p$ disjoint regions denoted $R_\iota^{(1)}, \ldots, R_\iota^{(2^p)}$, with each region defined by a different combination of signs of $f_m'(x)$ at each of the included knots.  Without loss of generality we may let $R^{(1)}_\iota$ denote the region where the derivative is non-negative at each of the included knots, in which case $f'_m(x) \geq 0$ for all $x \in [0,1]$.  For a specific $\iota$ and prior $p(\beta_\iota)$, one may compute the prior probability $\mbox{pr}( \beta_\iota \in R^{(1)}_\iota \mid \iota)$, which is identical to $\mbox{pr}\{ f_m'(x) \geq 0 \ \mbox{for all} \ x \in [0,1] \mid \iota \} $.

The key feature of our approach is that we specify priors on $\iota$ and $\beta_{\iota}$ that allow explicit control over the prior probability of a monotone function.  Given these priors, we compute $\mbox{pr}\{ f_m'(x) \geq 0 \ \mbox{for all} \ x \in [0,1] \mid y \}$, the posterior probability of a monotone function with $\iota$ and $\beta_\iota$ marginalized out.  The function is declared to be monotone if this probability exceeds a specified threshold.

  We now describe the overall approach for constructing $p(\iota)$ and $p(\beta_\iota)$.  The $\iota_j$ are assumed to be independent with $\mbox{pr}(\iota_j = 0) = p_j$.  Given $\iota$ and the error variance $\sigma^2$, the prior for $\beta_\iota$ is a discrete mixture of $2^p$ multivariate distributions,
$$
\beta_{\iota} \sim \sum_{d=1}^{2^p} q_d \Pi_d \, ,
$$
where $\Pi_d$ is constrained to have support on $R^{(d)}_\iota$.  Within each component of the mixture, there is a fixed combination of signs of $f_m'(x)$ at each of the included knots.  As $R^{(1)}_\iota$ corresponds to the region where $f_m'(x) \geq 0$ for all $x$, $q_1$ is the prior probability of monotonicity, conditional on $\iota$.

Two specific choices for $\Pi_d$ are considered: one based on the multivariate Gaussian distribution, and the other based on the multivariate method-of-moments distribution described by \citet{johnson:rossell:2010}.  Both priors involve the constraint matrix $L_\iota$ in order to ensure that the necessary integrals are analytically feasible.  This is analogous to the use of $g$-priors in ordinary linear regression. In the technical appendix these priors are constructed in detail, and a Markov-chain sampling algorithm is presented for sampling from the joint posterior distribution over all model parameters.

\section{Asymptotic properties of Bayes factors}

\subsection{Independent and identically distributed models}

This section develops the asymptotic properties of the Bayes factor in a test for monotonicity. This analysis requires a new approach that significantly extends previous methods used to study Bayes factors in the context of parametric linear models.  To introduce the new approach, we first consider the case of independent and identically distributed observations. This simplified setting allows us to convey the essential idea of the new argument, ignoring many of the additional technical complications that arise in a test for monotonicity of a regression function.

Suppose two Bayesian models, $M_1$ and $M_2$, are to be compared via the Bayes factor:
\begin{equation}
\label{eqn:bayesfactor1}
B_{12}=\frac{\int_{\mathbb {F}_1}\prod_{i=1}^n f_1(x_i \mid \theta_1)\,\pi_1(d\theta_1) }
{\int _{\mathbb{F}_2}\prod_{i=1}^n f_2(x_i \mid \theta_2)\,\pi_2(d\theta_2)} \, .
\end{equation}
Let $d_K(f,g)=\int f\log (f/g)$ be the Kullback--Leibler divergence between $f$ and $g$, and let
$$
d_H(f,g)=\left\{ \int \left(\sqrt{f}-\sqrt{g} \right)^2 \right\}^{1/2}
$$
be the Hellinger distance between $f$ and $g$, which is bounded by 2.

\begin{theorem}
Suppose that the data $x_1, \ldots x_n$ are assumed to arise from some true density $f_0(x)$, and that models $M_1$ and $M_2$ are to be compared, where $M_j=\{f_j(x \mid \theta_j),\,\pi_j(\theta_j),\,\theta_j\in \mathbb{F}_j\}.$  First, suppose that the true $f_0(x)$ is in the Kullback--Leibler support of $\pi_2$: for all $\epsilon>0$,
\begin{equation}
\label{eqn:condition1}
\pi_2 \left[ \theta_2: \,d_{K} \{ f_0(\cdot),f_2(\cdot \mid \theta_2) \} <\epsilon \right] >0 \, .
\end{equation}

Second, suppose that for all sufficiently large $n$ and for any $c>0$, the following bound holds almost surely under $\{f_0(x), \sigma_0\}$:
\begin{equation}
\label{eqn:condition2}
\sup_{\theta_1\in {\mathbb{F}_1} }\prod_{i=1}^n \frac{f_1(x_i \mid \theta_1)}{f_0(x_i)}<e^{nc} \, .
\end{equation}

Finally, suppose that
\begin{equation}
\label{eqn:condition3}
\inf_{\theta_1\in{\mathbb{F}_1}} d_H \left\{ f_1(\cdot \mid \theta_1),f_0(\cdot) \right\} >0 \, .
\end{equation}
Then $B_{12}\rightarrow 0$ almost surely under $f_0$.

\end{theorem}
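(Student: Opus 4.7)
The plan is to follow the square-root-likelihood technique of \citet{walker:hjort:2002}: bound the $M_1$-marginal above by a quantity whose expectation under $f_0$ is controlled by Hellinger affinities, and bound the $M_2$-marginal below via Schwartz's lemma. After dividing the numerator and denominator of (\ref{eqn:bayesfactor1}) by $\prod_{i=1}^n f_0(x_i)$, I would write
$$R_n = \int_{\mathbb{F}_1}\prod_{i=1}^n \frac{f_1(x_i\mid\theta_1)}{f_0(x_i)}\,\pi_1(d\theta_1),\qquad T_n = \int_{\mathbb{F}_2}\prod_{i=1}^n \frac{f_2(x_i\mid\theta_2)}{f_0(x_i)}\,\pi_2(d\theta_2),$$
so that $B_{12} = R_n/T_n$, and introduce the square-root analogue $I_n = \int_{\mathbb{F}_1}\prod_{i=1}^n \{f_1(x_i\mid\theta_1)/f_0(x_i)\}^{1/2}\,\pi_1(d\theta_1)$.

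By Fubini and the independence of the $x_i$, $E_{f_0}[I_n] = \int A(\theta_1)^n\,\pi_1(d\theta_1)$, where $A(\theta_1) = \int \sqrt{f_1(\cdot\mid\theta_1)\,f_0}$ is the Hellinger affinity. Since $d_H^2 = 2-2A$, condition (\ref{eqn:condition3}) gives $A(\theta_1) \leq 1 - \delta^2/2 < 1$ uniformly in $\theta_1$, where $\delta = \inf_{\theta_1} d_H\{f_1(\cdot\mid\theta_1),f_0\} > 0$. Hence $E_{f_0}[I_n] \leq (1-\delta^2/2)^n$, and Markov's inequality combined with Borel--Cantelli yields $I_n \leq (1-\delta^2/2)^{n/3}$ for all sufficiently large $n$, almost surely under $f_0$.

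To pass from $I_n$ back to $R_n$, set $v(\theta_1) = \prod_i \{f_1(x_i\mid\theta_1)/f_0(x_i)\}^{1/2}$. Then $v \geq 0$ and $R_n = \int v^2\,\pi_1(d\theta_1) \leq \{\sup_{\theta_1} v\}\cdot I_n$. Condition (\ref{eqn:condition2}) gives $\sup_{\theta_1} v \leq e^{nc/2}$ for any $c>0$ and all sufficiently large $n$, so $R_n \leq e^{nc/2}(1-\delta^2/2)^{n/3}$ eventually; fixing $c$ so that $c/2 + \tfrac{1}{3}\log(1-\delta^2/2) < 0$ forces exponential decay of $R_n$. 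For the denominator, condition (\ref{eqn:condition1}) together with Schwartz's lemma yields $e^{n\epsilon}T_n \to \infty$ almost surely under $f_0$ for every $\epsilon>0$, so $T_n \geq e^{-n\epsilon}$ eventually. Combining, $B_{12} \leq e^{n(c/2+\epsilon)}(1-\delta^2/2)^{n/3}$ eventually a.s., and picking $c$ and $\epsilon$ small enough relative to $-\log(1-\delta^2/2)$ forces $B_{12} \to 0$.

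The main obstacle, and where the Walker--Hjort idea is essential, is finding a surrogate for $R_n$ that has a tractable $f_0$-expectation. A direct calculation gives $E_{f_0}[R_n] = 1$, which contains no decay information. The square-root integral $I_n$ is the right object precisely because its expectation factorizes into the Hellinger affinity, which condition (\ref{eqn:condition3}) bounds strictly away from $1$; condition (\ref{eqn:condition2}) then plays the complementary role of ensuring that a factor of $e^{nc/2}$ is enough to undo the square root, with $c$ freely chosen smaller than the Hellinger-driven rate. The same architecture---an upper bound from square-rooting the numerator and a lower bound from Schwartz on the denominator---is what the later regression-function analysis will have to generalize.
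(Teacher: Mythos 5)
Your proposal is correct and follows essentially the same route as the paper's proof: the denominator is lower-bounded via Schwartz's theorem under the Kullback--Leibler support condition, and the numerator is split into the square root of the supremum of the likelihood ratio (controlled by condition (\ref{eqn:condition2})) times the square-root-likelihood integral, whose $f_0$-expectation is the $n$th power of the Hellinger affinity $1-\tfrac{1}{2}d_H^2$ and hence decays exponentially by condition (\ref{eqn:condition3}), with Markov and Borel--Cantelli converting this to an almost-sure bound. The only differences are notational (you name the affinity $A$ explicitly and spell out the constant bookkeeping slightly more carefully than the paper does).
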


\begin{proof}
To prove the result, consider the denominator of $B_{12}$ in Equation (\ref{eqn:bayesfactor1}) with a factor introduced, which is also introduced to the numerator, and so the factor cancels out:
$$I_{n2}=\int_{\mathbb{F}_2} \prod_{i=1}^n \frac{f_2(x_i \mid \theta_2)}{f_0(x_i)}\,\pi_2(d\theta_2).$$
It is well known that with the first condition of the theorem, $I_{n2}>e^{-n\tau}$ almost surely for all large $n$ and for any $\tau>0$. See, for example, \citet{schwartz:1965}.

Now consider the numerator with the same factor introduced:
$$I_{n1}=\int_{\mathbb{F}_1} \prod_{i=1}^n \frac{f_1(x_i \mid \theta_1)}{f_0(x_i)}\,\pi_1(d\theta_1).$$
We can write this with an upper bound as follows:
$$I_{n1}\leq \left\{\sup_{\theta_1\in {\mathbb{F}_1} }\prod_{i=1}^n \frac{f_1(x_i \mid \theta_1)}{f_0(x_i)}\right\}^{1/2}
\,\int_{\mathbb{F}_1} \left\{\prod_{i=1}^n \frac{f_1(x_i \mid \theta_1)}{f_0(x_i)}\right\}^{1/2}\,\pi_1(d\theta_1).$$
The second condition will ensure that the first term remains bounded.   The second term, labelled as $J_{n1}$, has expectation
$$
E (J_{n1}) \leq \int_{\mathbb{F}_1} \bigg[1- \half d_H^2\{ f(\cdot \mid \theta_1),f_0(\cdot) \} \bigg]^n\,\pi_1(d\theta_1) \, .
$$
Hence, with the third condition, for some $\eta>0$, $E (J_{n1}) <e^{-n\eta}$ and thus $J_{n1}<e^{-n\delta}$ almost surely for all large $n$ and for some $\delta>0$.

Putting these together, we have
$$B_{12}<\exp\left\{-n(\delta-\half c-\tau)\right\}\quad\mbox{almost surely for all large   }n\mbox{     for any    }c,\tau>0.$$
Choose $\half c+\tau<\delta$ to obtain the desired result.
\end{proof}

Using a similar argument, it is also possible to show that $B_{12}\rightarrow\infty$ almost surely if the following three conditions hold.  First, the true $f_0(x)$ is in the Kullback--Leibler support of $\pi_1$; that is, for all $\epsilon>0$,
$$\pi_1 \left[ \theta_1:\,d_{K}\{ f_0(\cdot),f_1(\cdot \mid \theta_1)\}<\epsilon \right]>0.$$
Second,
$$\sup_{\theta_2\in {\mathbb{F}_2} }\prod_{i=1}^n \frac{f_2(x_i \mid \theta_2)}{f_0(x_i)}<e^{nc}\quad\mbox{almost surely for all large   }n\mbox{     for any    }c>0.$$
Finally,
$$\inf_{\theta_2\in{\mathbb{F}_2}} d_H\{ f_2(\cdot \mid \theta_2),f_0(\cdot) \}>0.$$
Standard results  indicate that if $B_{12}\rightarrow 0$ then
$P(M_1 \mid \mbox{data})\rightarrow 0,$
whereas if $B_{12}\rightarrow\infty$ then
$P(M_1 \mid \mbox{data})\rightarrow 1.$

\subsection{Regression models: monotone vs. non-monotone}

We will now adapt this result to examine the spline-based models introduced previously. Suppose that $M_1$ is a normal model with monotone mean
function, 
$$M_1=\{N(y \mid f_1(x),\sigma_1^2),\,\pi_1(f_1,\sigma_1^2)\},$$ 
and $M_2$ is a normal model with non--monotone mean function,
$$M_2=\{N(y \mid f_2(x),\sigma_2^2),\,\pi_2(f_2,\sigma_2^2)\}.$$
Assume the $x_i$ are sampled from distribution $Q$ with support on $(0,1)$.
Let $f$ be a regression spline function, which can be monotone or non-monotone, 
and define the sieve 
$$S_n=\left\{f \in \mathcal{F}: f \,\,\mbox{has  knot points at the} \, \,  x_i \,\,\mbox{and}\,\int |f'(x)|^2\,d x <\lambda_n\right\}$$
for some $\lambda_n\uparrow+\infty$, where $\mathcal{F}$ is the space of continuous functions on $(0,1)$ with piecewise-continuous derivatives.  Thus $S_n$ includes the quadratic regression splines from Section 3.

Finally, define
$$B_{12}=\frac{\int \prod_{i=1}^n N\{y_i \mid f_1(x_i),\sigma_1^2 \} \,\pi_1(d\theta_1)  }{\int \prod_{i=1}^n N\{y_i \mid f_2(x_i),\sigma_2^2\} \,\pi_2(d\theta_2)  } \, ,$$
where $\theta_j=(f_j,\sigma_j)$.

Our main result is stated below.

\begin{theorem}
Assume that $f_0$, the true regression function, is bounded and continuous on $(0,1)$, and let $M_1$ and $M_2$ be defined as above.  Suppose that $\lambda_n=O(n^{1/4-\delta})$ for some $\delta>0$; that the $L_2$ support of both $\pi_1(f_1)$ and $\pi_2(f_2)$, which may depend on the sample size, coincide with $S_n$; and that the support of the variances coincides with $\sigma_1,\sigma_2<\sigma_+ <\infty$. Moreover, suppose that for some $\psi>0$, $\epsilon>0$, and $\rho < 1$, we have that, for all sufficiently  large $n$,
\begin{equation}
\pi^{(n)}\big\{ f:d_{\psi}(f,{\cal F}_1)<\epsilon \mid f\notin {\cal F}_1 \big\} <\rho^n \, ,\label{cond}
\end{equation}
 where
$$d_{\psi}(f,{\cal F}_1)=\inf_{f_1\in {\cal F}_1}\{d_{\psi}(f,f_1)\} \; , \quad \mbox{where} \; d_{\psi}(f,f_1)=1-n^{-1}\sum_{i=1}^n \exp[-\psi \{f(x_i)-f_1(x_i)\}^2] \, .$$
Then, if the true model lies in $M_1$,
$B_{12}$ diverges almost surely, whereas if the true model lies in  $M_2$, then $B_{12}\rightarrow 0$ almost surely.
\end{theorem}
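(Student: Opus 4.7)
The plan is to mirror the proof of Theorem 1, handling the case of truth in $M_1$ in detail; the case of truth in $M_2$ is symmetric. Normalizing both integrals by $\prod_{i=1}^n N\{y_i\mid f_0(x_i),\sigma_0^2\}$ turns the Bayes factor into $I_{n1}/I_{n2}$, with each $I_{nj}$ an integral of a likelihood ratio against $\pi_j$. For the numerator I would invoke a Schwartz-type argument: since $f_0\in\mathcal{F}_1\cap S_n$ for all large $n$, lies in the $L_2$ support of $\pi_1$, and $\sigma_0$ lies in the support of the prior on $\sigma_1$, Kullback--Leibler neighborhoods of $(f_0,\sigma_0)$ carry positive $\pi_1$-mass, and the classical Schwartz lemma then gives $I_{n1}>e^{-n\tau}$ almost surely for every $\tau>0$ and all large $n$.

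For the denominator I would apply the Walker--Hjort square-root trick,
\[
I_{n2} \leq \Bigl\{\sup_{\theta_2 \in S_n\times(0,\sigma_+)}\prod_{i=1}^n \frac{N\{y_i\mid f_2(x_i),\sigma_2^2\}}{N\{y_i\mid f_0(x_i),\sigma_0^2\}}\Bigr\}^{1/2} J_{n2} \,,
\]
and bound the two factors separately. The supremum is controlled by a bracketing-entropy estimate on $S_n$: the Sobolev ball $\{\int |f'|^2<\lambda_n\}$ is totally bounded in $L_\infty$ with covering number polynomial in $\lambda_n$ up to logarithmic factors, so the rate $\lambda_n=O(n^{1/4-\delta})$ together with Gaussian concentration of the log-likelihood ratio forces the supremum to equal $\exp\{o(n)\}$ almost surely, giving the regression analog of condition (\ref{eqn:condition2}) of Theorem 1. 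For $J_{n2}$, conditioning on the design and using the identity $d_H^2\{N(a,\sigma^2),N(b,\sigma^2)\}=2\{1-\exp(-(a-b)^2/(8\sigma^2))\}$ together with $1-a\leq e^{-a}$ yields
\[
E(J_{n2}\mid x_1,\ldots,x_n)\leq \int \exp\{-n\, d_\psi(f_2,f_0)\}\pi_2(d\theta_2)
\]
with $\psi=1/(8\sigma_+^2)$, the minimal admissible value since $\sigma_j<\sigma_+$. Splitting $\pi_2$ via the event $\{d_\psi(f,\mathcal{F}_1)\geq\epsilon\}$, condition (\ref{cond}) bounds the complementary mass by $\rho^n$, while on the event itself $d_\psi(f,f_0)\geq\epsilon$ because $f_0\in\mathcal{F}_1$, making the integrand at most $e^{-n\epsilon}$. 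Hence $E(J_{n2})\leq e^{-n\eta}$ for some $\eta>0$, and Markov plus Borel--Cantelli give $J_{n2}\leq e^{-n\eta/2}$ almost surely.

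Combining these, $B_{12}=I_{n1}/I_{n2}\geq \exp\{n\eta/2-n\tau-o(n)\}$, which diverges for any $\tau<\eta/2$. The truth-in-$M_2$ direction is symmetric, with one simplification: no analog of (\ref{cond}) is required for $\pi_1$ because that prior is already supported on $\mathcal{F}_1$; continuity of $f_0$ together with $f_0\notin\mathcal{F}_1$ and the spread of the design under $Q$ yield $\inf_{f_1\in\mathcal{F}_1}d_\psi(f_0,f_1)>0$ almost surely. The principal obstacle is the sieve supremum: verifying that $\lambda_n=O(n^{1/4-\delta})$ keeps the log-supremum at $o(n)$ demands a careful matching of the Sobolev metric entropy against the quadratic form of the Gaussian log-likelihood ratio. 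A subsidiary subtlety is ensuring that Schwartz's KL-support argument still delivers a uniform $e^{-n\tau}$ lower bound as $S_n$ grows, which requires that the prior mass $\pi_1$ assigns to small $L_2$-balls around $f_0$ not shrink too quickly with $n$.
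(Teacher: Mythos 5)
Your architecture is the same as the paper's: normalize both marginal likelihoods by the true density, get the lower bound $e^{-n\tau}$ for the correctly specified model's integral from Schwartz's Kullback--Leibler support argument, apply the Walker--Hjort square-root decomposition to the misspecified model's integral, bound the $J_n$ term through the Hellinger affinity of two normals (which is exactly where $\psi$ of order $1/(8\sigma_+^2)$ enters), and invoke condition (\ref{cond}) to kill the prior mass near the boundary where non-monotone functions shadow monotone ones; the reverse direction likewise dispenses with (\ref{cond}) because a fixed non-monotone $f_0$ is uniformly separated from $\mathcal{F}_1$. The one place you genuinely depart from the paper is the supremum term. The paper does not use metric entropy there: it profiles out $\sigma^2$ explicitly, so that the log-supremum reduces to $-\frac{n}{2}\log\{\widehat{\sigma}^2/\sigma_0^2\}+o(n)$ with $\widehat{\sigma}^2=n^{-1}\sum_i\{y_i-\widehat{f}(x_i)\}^2$, and then proves a short lemma showing $\liminf_n\widehat{\sigma}^2\geq\sigma_0^2$ whenever the sieve MLE satisfies $n^{-1}\sum_i\{\widehat{f}(x_i)-f_0(x_i)\}^2\to 0$; that consistency is imported from Geman and Hwang's sieve theory, which is where $\lambda_n=O(n^{1/4-\delta})$ and the exponential-moment condition on $F_Y$ are actually consumed. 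Your entropy-plus-concentration sketch is aiming at the same fact, but it is left unexecuted (you flag it yourself as the principal obstacle), and as stated it has a small flaw: $\int|f'|^2<\lambda_n$ alone does not make the class totally bounded in $L_\infty$ without an additional uniform bound on the functions, so the covering-number claim needs repair. The paper's profiling route is worth adopting because it converts the whole supremum problem into the single, citable statement of sieve-MLE consistency in empirical $L_2$; your route, if completed, would essentially be reproving that result.
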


The full proof is somewhat technical and is provided in the appendix.  The essential new ideas are conveyed by the proof of Theorem 1, while many of the technical details draw on the method of sieves studied by \citet{geman:hwang:1982}.

We now briefly describe the connection between the theorem and our spline-based approach, and the way the conditions of the theorem correspond to the three conditions of Theorem 1.  The prior for $f$, say $\pi^{(n)}$, will depend on the sampled $x_i$ and have as its actual support the functions in the sieve $S_n$. In order to ensure the true model is in the Kullback-Leibler support of the prior for all large $n$, we assume the true variance is bounded by $\sigma_+$, and that for all sufficiently large $n$, $f_0$ is in the $L_2$ support of the prior $\pi^{(n)}$.  That is, for all $\epsilon>0$, and sufficiently large $n$,
$$\pi^{(n)} \left[ f:\int \{ f_0(x)-f(x)\}^2\,Q(d x)<\epsilon \right] >0 \, .$$

In the proof of Theorem 2 we need the condition
that for some $t>0$,
$$
\int e^{t|y|}\,F_Y(y)\,dy<\infty \, ,
$$
where $F_Y$ is the true marginal distribution of the $y_i$. 
This condition on $F_Y$ and $\lambda_n$ is to ensure that the sieve maximum likelihood estimator of $f_0$ exists and converges appropriately for our purposes. Here 
$$F_Y(dy) =\int N(dy \mid f_0(x),\sigma_0^2)\,Q(d x) \, ,$$
and hence the condition on $F_Y$ will be satisfied if 
$$\int \exp\left\{f_0^2(x)\right\}\,Q(d x)<\infty \, .$$
This holds when $f_0$ is bounded, which we assume in Theorem 2.

The key to both models is whether $f\in S_n$ or not.  That is, we need 
$$\pi^{(n)}\{f\in S_n\}=1.$$
Now both models have $f$ as continuous and $f'$ as piecewise continuous, so we only need
$$\pi^{(n)}\left\{f: \int |f'|^2<\lambda_n\right\}=1.$$
This implies, for the smoothing spline model in Section 2,
$$\pi^{(n)}\left\{g:\int g^2<\lambda_n\right\}=1,$$
whereas a quadratic regression-spline model of the type described in Section 3 needs
$$\pi^{(n)}\left\{(\alpha,\beta):\int \Big| \beta_1+2\beta_2x+2\sum_{j\geq 3}\beta_j(x-\bar{x}_{j-3})_+ \Big| \mbox{d}x <\lambda_n\right\}=1.$$

Thus to ensure consistency in our proposed approaches, we need the priors for both models to be restricted so that the integrals of the squares of the gradients are bounded by an increasing function of the sample size $n$, which we denote by $\lambda_n$ and which must satisfy $\lambda_n=O(n^{1/4-\delta})$ for some $\delta>0$.  In both cases, this is straightforward to apply in practice and simply involves removing increasingly diminishing tail regions by truncating the priors.  Hence, achieving consistency does not require overly strict conditions, merely an arbitrarily small tail truncation, since for finite samples the leading constant associated with the asymptotic condition on $\lambda_n$ can be arbitrarily large.  We comment in detail on condition (\ref{cond}) in the appendix.

\section{Experiments}
\label{sec:experiments}

\subsection{Description of simulations}

This section reports the results of an extensive simulation study benchmarking the proposed methods against several alternatives that previously appeared in the literature.  There are four non-Bayesian methods in the study.  The first two are well-known in the classical literature: the U test from \citet{zheng:1996}, and the bootstrap-based test from \citet{bowman:etal:1998}.  The second two are more recent methods due to \citet{baraud:2005} and \citet{akakpo:2014}.

We also included two Bayesian methods as benchmarks. First, there is the method from \citet{salomond:2014} described earlier. Second, there is the method of Bayesian Bonferroni correction proposed by \citet{dunson:2005}, where each increment of $f$ is as
\begin{eqnarray*}
(y_i \mid f_i, \sigma^2) &\sim& \N(f_i, \sigma^2) \, , \quad f_i = f_{i-1} + \delta_i \, , \\
(\delta_i \mid w, \tau^2) &\sim& w \N^+ (\delta_i \mid 0, \tau^2)
+ (1-w) \N (\delta_i \mid 0, \tau^2) \, ,
\end{eqnarray*}
where $w$ is the mixing probability and $ \N^+$ indicates a normal distribution truncated below at zero.  By analogy with Bonferroni correction, one then chooses $w$ as a function of the sample size such that the event $\{\delta_i > 0 \ \mbox{for all} \ i\}$ has prior probability 1/2.  This model can be fit via Gibbs sampling.

Our simulations used the model $y_i = f (x_i) + \epsilon_i$, with $n = 100$ equally spaced $x$ values on $(0, 1]$, and the $\epsilon_i$ independent and identically distributed $N(0, 0.1^2)$ random variables.  We generated 100 data sets for each of the 11 test functions below:
\begin{eqnarray*}
f_1 &=& 4(x-1/2)^3 \one_{x \leq 1/2} + 0.1(x-1/2) - 0.25\exp\{ -250(x-1/4)^2\} \\
f_2 &=& -x/10 \\
f_3 &=& (-1/10)\exp\{-50(x-1/2)^2\} \\
f_4 &=& (1/10) \cos(6 \pi x) \\
f_5 &=& x/5 + f_3(x) \\
f_6 &=& x/5 + f_4(x) \\
f_7 &=& x + 1 - (1/4) \exp\{-50(x-1/2)^2\} \\
f_8 &=& x^2/2 \\
f_9 &=& 0 \\
f_{10} &=& x+1 \\
f_{11} &=& x + 1 - (9/20) \exp\{-50(x-1/2)^2\} \, .
\end{eqnarray*}
These span a variety of scenarios and include both monotone and non-monotone functions.  Functions $f_1$ to $f_7$, $f_{10}$, and $f_{11}$ were used by \citet{akakpo:2014}.  \citet{salomond:2014} used functions $f_1$ to $f_7$, and added functions $f_8$ and $f_9$.  The only difference is that we modified functions $f_1$ to $f_6$ so that they varied more slowly over the domain $[0,1]$.  This was necessary to ensure that the probability of correct classification was not essentially one.  Note that the tests in \citet{akakpo:2014} and \citet{salomond:2014} are actually for a monotone non-increasing function.  Thus for a test of a monotone non-decreasing function, we applied their tests to the negative of the simulated $y$ values from the functions above.



For the Gaussian regression spline-based test, we use the prior described in Equation (4) of the appendix.  We used $m = 33$ equally spaced knots, and set $c = 100$, $p_j = pr(\iota_j = 0) =$ 0$\cdot$8, and $q_1 = \mbox{pr}(f_m'(x) \geq 0 \mid \iota)  =$ 0$\cdot$1.  For the method-of-moments regression spline-based test, we used the prior distribution in Equation (5) of the appendix.  We set $c=10$, $p_j = pr(\iota_j = 0) =$ 0$\cdot$8, and $q_1 = \mbox{pr}(f_m'(x) \geq 0 \mid \iota)  =$ 0$\cdot$1.  Note that $c$ is a variance under the Gaussian model but a scale parameter under the method-of-moments prior, so the two values are comparable under each model.

For the constrained smoothing-spline approach, we adopt an empirical-Bayes strategy for hyperparameter specification.  We used the plug-in estimate $\sigma^2$ of the error variance derived from local linear regression under a Gaussian kernel function, with the bandwidth chosen by leave-one-out cross validation.  This is an asymptotically efficient estimator of $\sigma^2$, regardless of whether the underlying function is monotone.  We then used the mixture prior $\tau \sim (1/3) \delta_0 + (2/3) \mbox{Ga}(3,3 \hat{\tau} )$, where $\delta_0$ is a Dirac measure, and where $\hat{\tau}$ is the maximum first difference of the local-linear-regression estimate of $f(x)$.  When $\tau^2 = 0$ and $g(0) = 0$, $f(x)$ is a globally flat function.  Finally, we set $\xi \sim N(1,1)$.  Note that whenever $\xi < 0$, this implies that $f'(x)$ is an unconstrained Wiener process on $[0,1]$.  Also, observe that when $\tau^2 = 0$, our model does not depend on $\xi$.  We therefore arbitrarily set $\xi = \infty$ whenever $\tau^2 = 0$, so that $f(x)$ is classified as a monotone function.

All Markov-chain-based sampling schemes were run for a burn-in period of 20,000 iterations and a sampling period of 100,000 iterations.  The particle-filter algorithm for fitting the smoothing spline used 100,000 particles.  In each case, we calculate the Bayes factor by computing the posterior odds ratio and dividing by the original priors odds ratio.

\subsection{Results}

\begin{table}[t]
\caption{\label{tab:simresultsnew} Results of the simulation study.  Each entry is the number of times out of 100 that the function was correctly classified as either monotone or non-monotone.  Salomond: method from \citet{salomond:2014}.  Smoothing: smoothing spline test.  Gauss: regression-spline test with Gaussian priors.  MoM: regression-spline test with method-of-moments priors.  U-test: the test from \citet{zheng:1996}.  Baraud: the test from \citet{baraud:2005} with $l_n = 25$.  Akakpo: the test from \citet{akakpo:2014}.  Average: average correct classification rate across all 11 test functions.}
\vspace{0.5pc}
\centering
\begin{small}
\begin{tabular}{rrrrrrrr}
Function & Salomond & Smoothing &  Gauss &  MoM & U-test & Baraud & Akakpo \\ 
$f_1$ & 19 & 99 & 100 & 99 & 59 & 6 & 9 \\ 
$f_2$ & 83 & 72 & 74 & 63 & 59 & 64 & 33 \\ 
  $f_3$ & 51 & 34 & 35 & 49 & 59 & 53 & 43 \\ 
  $f_4$ & 73 & 80 & 91 & 98 & 0 & 92 & 92 \\ 
  $f_5$ & 56 & 95 & 85 & 90 & 99 & 24 & 25 \\ 
  $f_6$ & 86 & 96 & 99 & 100 & 34 & 77 & 74 \\ 
  $f_7$ & 13 & 92 & 91 & 47 & 16 & 1 & 4 \\ 
  $f_8$ & 98 & 80 & 93 & 93 & 41 & 100 & 100 \\ 
  $f_9$ & 96 & 98 & 95 & 95 & 99 & 97 & 94 \\ 
  $f_{10}$ & 99 & 99 & 97 & 99 & 28 & 100 & 99 \\ 
  $f_{11}$ & 100 & 100 & 99 & 99 & 100 & 71 & 82 \\
  Average & 70$\cdot$3 &  85$\cdot$9 &  87$\cdot$2 & 84$\cdot$7 & 54$\cdot$1 & 62$\cdot$3 & 59$\cdot$5
\end{tabular}
\end{small}
\end{table}

For the frequentist tests, we calculated a $p$-value under the null hypothesis of monotonicity, and rejected the null whenever $p \leq p^\star$.  For the Bayesian tests, we rejected the null hypothesis of monotonicity whenever the Bayes factor in favor of a non-monotone function exceeded a critical value $b^\star$.  To ensure a sensible comparison across these disparate methods, we calculated a separate critical value for each method as follows.  First, we simulated $1000$ test data sets of size $100$ each, for which the true regression function was $f(x) = 0$.  Thus, within each test data set, $y_i \sim \N(0, 0.1^2)$ for $i=1,\ldots, 100$.  Then, we applied each test to the 1000 simulated data sets, and calculated the threshold at which the given test would reject the null hypothesis five percent of the time.  This threshold was used as that test's critical value for all 11 test functions.  The goal of this calibration process is to ensure that all methods have an approximate size of $\alpha=$ 0$\cdot$05 under the most difficult boundary case of a globally flat function.  It also ensures that the assumed prior probability of a monotone function does not play a role in the performance of the Bayesian tests.  This calibration phase used 1000 simulations, rather than 100, because of the difficulty in estimating the tail area accurately with only 100 test statistics.

Table \ref{tab:simresultsnew} shows the number of times, out of 100 simulated data sets, that each method correctly classified functions $f_1$ to $f_{11}$ under that method's calibrated critical value.  In our simulation study, the bootstrap method of \citet{bowman:etal:1998} was uniformly beaten by the three other non-Bayesian tests.  Similarly, the Bayesian Bonferroni method was uniformly beaten by the three other Bayesian tests.   For the sake of brevity, we have not included the results for these two methods in Table \ref{tab:simresultsnew}.

As Table \ref{tab:simresultsnew} shows, although no one method is uniformly the best, the Bayesian methods exhibit the best overall frequentist properties for functions among those considered.  This is especially true of the three spline-based methods proposed here: they never perform much worse than the other methods, and in some cases perform much better than the best frequentist approach.  We conclude that the spline-based methods are a robust, powerful choice across a wide variety of underlying functions.  The method of \citet{salomond:2014} performs almost as well, and better on select cases, but is not as robust overall.  For example, neither this method nor the frequentist methods are easily able to detect the small dip in function $f_7$, which the spline-based methods can detect much more frequently.

\section{Discussion}

We conclude by highlighting one final interesting aspect of the Bayesian methods, beyond their theoretical properties and good finite-sample frequentist performance.  Unlike the classical approaches considered here, the Bayesian approaches can be used to test both global and local hypotheses about a function, without invoking any concerns about multiplicity.  We illustrate this with an example.  In Figure \ref{fig:examplef} we see two examples of data sets in our simulation study: one where $f(x) = 1 + 8.0 (x-0.75)^2$, which is non-monotone; and the other where $f(x) = 0.5x$, which is monotone.  For each of these two data sets, both the \citet{baraud:2005} and \citet{akakpo:2014} tests reject the null hypothesis of monotonicity at the $\alpha =$ 0$\cdot$05 level.  The Bayesian test using constrained smoothing splines also favors the hypothesis of non-monotonicity for each case: posterior probability 0$\cdot$98 for the function in the top pane, and 0$\cdot$74 for the one in bottom pane.

The difference between the Bayesian and frequentist methods comes when we try to identify the range of values where $f$ exhibits likely non-monotonic behavior.  Each frequentist test is an omnibus test, loosely analagous to an $F$-test in an analysis-of-variance problem.  They therefore lead to a large multiple-comparison problem if one wishes to test for specific local features of the data set that yielded a rejection from the omnibus test.  In contrast, the posterior probability associated with any such local question, such as whether the function decreases between $x=0.6$ and $x=0.8$, arises naturally from the joint posterior distribution under the Bayesian model. One may ask an unlimited number of such questions about the posterior distribution, without posing any multiplicity concerns.

\begin{figure}
\includegraphics[width=0.5\textwidth]{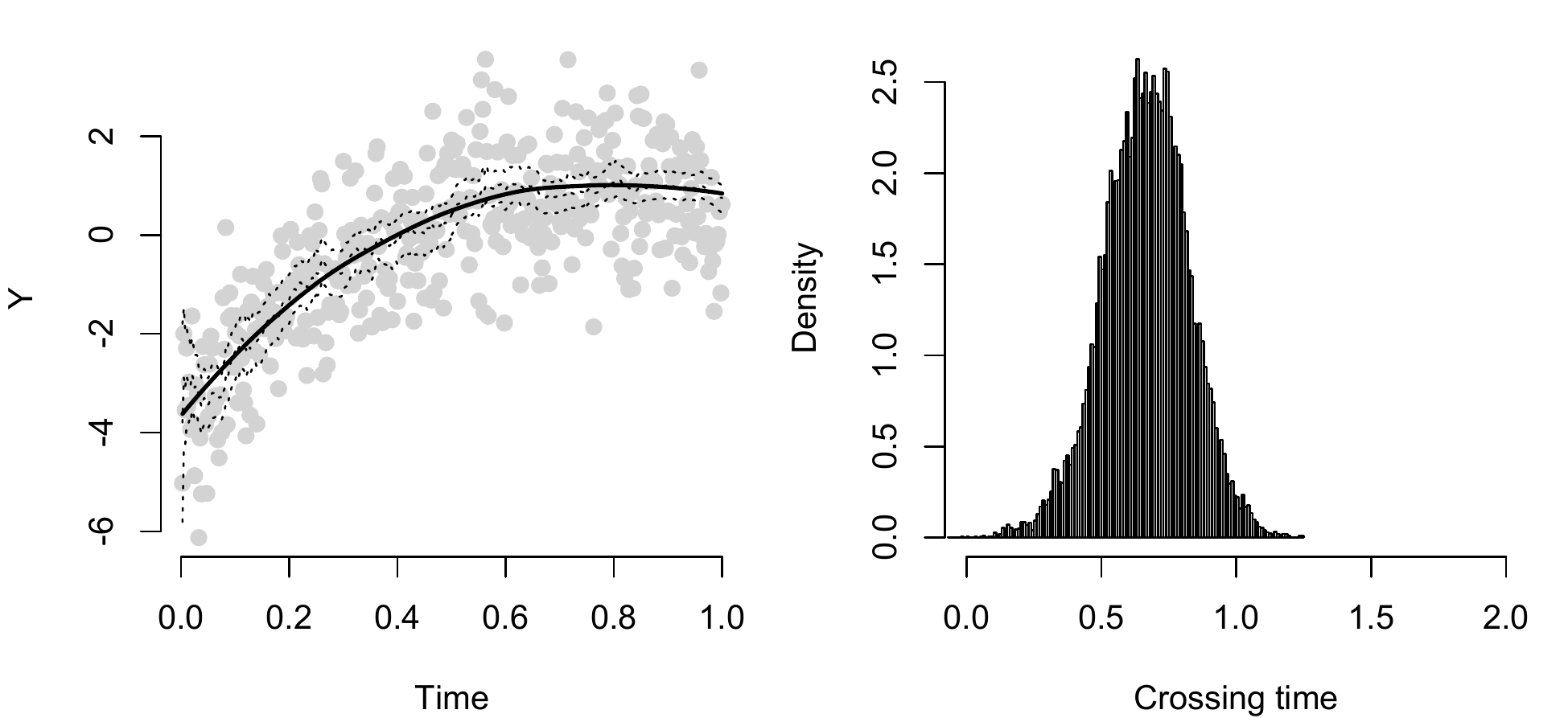} 
\includegraphics[width=0.5\textwidth]{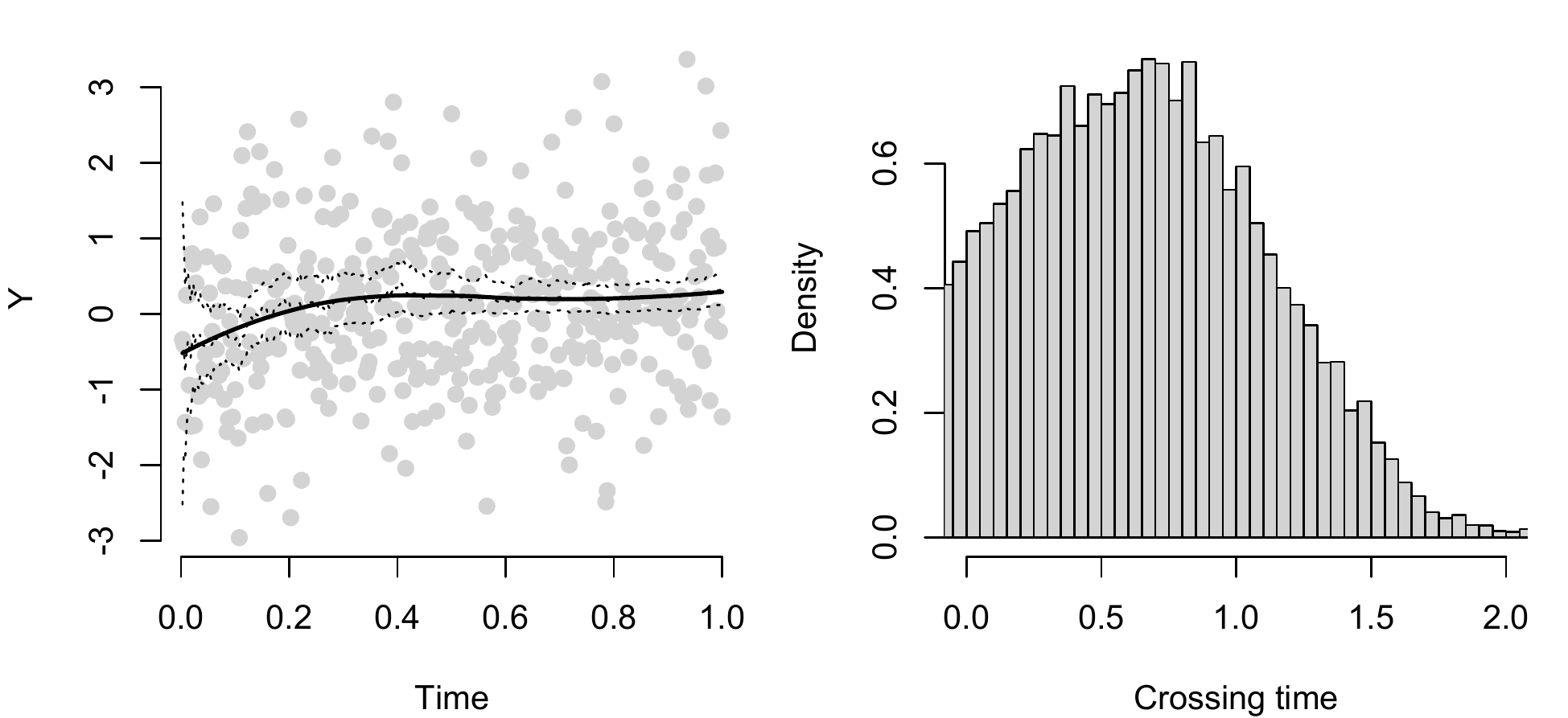}
\caption{\label{fig:examplef} The left two panels show results for a data set where $E(y) = 1 + 8.0 (x-0.75)^2$.  The first panel shows the data as grey dots, the filtered mean and $95\%$ filtered confidence interval for $f$ as dotted lines, and the smoothed mean as a solid line.  The second panel shows the histogram of draws from the posterior distribution of the first crossing time $\xi$.  The right two panels show the same two figures for a data set where $E(y) = 0.5x$.}
\end{figure}

\paragraph{Acknowledgements.}

The authors thank two referees, the associate editor, and the editor for their helpful comments in improving the paper. They also thank J.B.~Salomond for graciously sharing his computer code implementing his testing method.  The second author's research was partially supported by a CAREER grant from the U.S.~National Science Foundation (DMS-1255187).

\newpage

\appendix

\setcounter {equation} {0}

\section{Further detail on constrained smoothing splines}

Proposition 1 in the main manuscript characterizes the distribution of the increments of the stochastic process $g(s)$ for a particular value of the crossing time $\xi$.  It naturally suggests a sequential Monte Carlo algorithm for model-fitting.  Let $\Delta_j = t_{j+1} - t_j$ denote the $j$th time increment, and approximate the value of $f$ at time $t_i$ as $f(t_i) \approx \sum_{j < i} g(t_j) \Delta_j$, assuming that the time bins are sufficiently small such that $g(t_j) \Delta_j$ is a good approximation of $\int_{t_{j}}^{t_{j+1}} g(s) ds$.

Subject to this approximation,
\begin{eqnarray}
(y_i \mid f_i, \sigma^2) &\sim& \N(f_i, \sigma^2) \\
f_i &=& f_{i-1} + \Delta_{i-1} g_{i-1} \\
(g_i \mid \tau, \xi, g_{i-1}) &\sim&
\left\{
\begin{array}{l l}
\mbox{FN}(g_i \mid g_{i-1}, \xi, \tau) & \mbox{if } t_j < \xi \, ,\\
\mbox{N}(g_i \mid 0, \tau^2(t_j - \xi) ) & \mbox{if } t_{j-1} < \xi < t_j \, , \\
\mbox{N}(g_i \mid g_{i-1}, \tau^2(t_j - t_{j-1}) ) & \mbox{if } t_{j-1} > \xi  \, ,
 \end{array}
 \right. \label{eqn:propagatederiv}
\end{eqnarray}
recalling that FN denotes the fractional normal distribution from Proposition 1.  The model is characterized by the unobserved state vector $g = (g_1, \ldots, g_N)^T$, and the three unknown parameters $\xi$, $\tau$, and $\sigma^2$.  To fit it, we use the particle-filtering algorithm from \citet{liu:west:2001}, which is a modification of sequential importance resampling.  The idea is to introduce a particle approximation to the time-$t$ filtered distribution of the state $g_t$ and unknown parameters $\tau$ and $\xi$.  The problem of particle decay is handled by propagating each particle forward with a small jitter added to $\xi$ and $\tau$, drawn from Gaussian kernels centered at the current values.  This would ordinarily result in a density estimate that is over-dispersed compared to the truth.  But \citet{liu:west:2001} suggest a shrinkage correction that shifts the propagated values toward their overall sample mean, resulting in a particle approximation with the correct degree of dispersion.

With a moderate amount of data, $\sigma^2$ can be estimated quite precisely.  We use a simple plug-in estimate, derived from an unconstrained local-linear-regression estimate of $f$, with the kernel bandwidth chosen by leave-one-out cross-validation.   This produces an estimate for $\sigma^2$ that is asymptotically efficient, regardless of whether the underlying function is monotone.  We also experimented with the particle-learning approach of \citet{carvalho:etal:2010}, by tracking sufficient statistics for $\sigma^2$ as part of the state vector.  Only very small differences between these two approaches were observed, seemingly validating the simpler plug-in method.  Code implementing the method is available from the authors.

\section{Further details on constrained regression splines}

\subsection{The model}

This section develops two tests of monoticity using a regression spline model with different prior distributions for the regression coefficients.  Equation (5) from the main manuscript specifies a finitely parametrized  approximation to the function $f(x)$, and introduces our notation for the problem.  For the sake of clarity, we re-iterate this notation here.  We represent the unknown function as
$$
f_m(x) = \alpha + \beta_1 x + \beta_2 x^2 + \beta_3 (x - \xtil_1)^2_+ +  \cdots + \beta_{m+2}(x -\xtil_{m})^2_+ \, ,
$$
where the $\xtil_j$ are the ordered knot locations, and $z_+$ indicates the positive part of $z$.  Using all $m$ knots, this model may be re-written in matrix notation as $y = \alpha 1 + X \beta + \epsilon$, where $y$ is the $n\times 1$ vector of observations, $1$ is a vector of ones, $X$ is an $n \times (m+2)$ design matrix, and $\beta$ is the vector of spline coefficients.  Let $\iota$ be a vector of indicator variables whose $j$th element takes the value 0 if $\beta_j = 0$, and 1 otherwise.  Let $\beta_{\iota}$ consist of the elements of $\beta$ corresponding to those elements of $\iota$ that are equal to one, and let $p = |\iota|$ denote the number of nonzero entries in $\iota$.   \citet{shively:sager:walker:2009} derive the constraints on $\beta_{\iota}$ that ensure the monotonicity of $f_m(x)$ for a given $\iota$.  Specifically, $f_m(x)$ is monotone whenever $L_{\iota} \beta_{\iota}  \equiv \gamma_{\iota} \geq 0$, where $L_\iota$ is a known lower-triangular matrix that depends on $\iota$ and the $\xtil_j$'s.

Given $\iota$, the $\beta_\iota$ space is divided into $2^p$ disjoint regions denoted $R_\iota^{(1)}, \ldots, R_\iota^{(2^p)}$, with each region defined by a different combination of signs of the derivative $f_m'(x)$ at each of the included knots.  In general, if the sign of $f'_m(x)$ at any included knot is negative, then the function is not monotone.  Without loss of generality we may let $R^{(1)}_\iota$ denote the region where the derivative is non-negative at each of the included knots, in which case $f'_m(x) \geq 0$ for all $x \in [0,1]$.  For a specific $\iota$ and prior $p(\beta_\iota)$, one may compute the prior probability $\mbox{pr}( \beta_\iota \in R^{(1)}_\iota \mid \iota)$, which is identical to $\mbox{pr}\{ f_m'(x) \geq 0 \ \mbox{for all} \ x \in [0,1] \mid \iota \} $.

Given $\iota$ and $\sigma^2$, the prior for $\beta$ is a mixture of $2^p$ distributions, with the $d$th mixture component $\Pi_d$ constrained to have support on $R_\iota^{(d)}$.  Two specific choices for $\Pi_d$ are considered: one based on the multivariate Gaussian distribution, and the other based on the multivariate method-of-moments distribution described by \citet{johnson:rossell:2010}.  As discussed below and shown in the simulation experiment in Section 5 of the main manuscript, the two resulting tests have different small sample properties.

We first note that a multivariate Gaussian distribution for the regression coefficients is inappropriate for testing purposes, because for any given region $R_{\iota}^{(d)}$, $\mbox{pr}(\beta_{\iota} \in R_{\iota}^{(d)} \mid \iota)$ will be very small if there are a large number of knots in the model.  Because $R_{\iota}^{(1)}$ is identified with the region where $f(x)$ is monotone, this means there is a vanishingly small prior probability on the null hypothesis as the number of knots increases.

We therefore take the following approach.  For a given $\iota$ not identically 0, let $L_{\iota}$ be the lower-triangular matrix that defines the monotonicity constraint on $\beta_{\iota}$.  Let $c$ be a fixed scale factor.  The Gaussian-based prior we use is
\begin{equation}
\label{eqn:regsplinegaussprior}
(\beta_{\iota} \mid \iota, \sigma^2) \sim \sum_{d=1}^{2^p} q_d \ \mbox{TN} \left\{ \beta_{\iota} \mid 0, c \sigma^2 {(L_i L_i^T)}^{-1}, R_\iota^{(d)} \right\} \, ,
\end{equation}
where $\mbox{TN}(x \mid m, V, R)$ denotes the density function, evaluated at $x$, of the multivariate normal distribution with mean vector $m$, covariance matrix $V$, and truncation region $R$.  The integral of the density function over each of the $2^p$ mutually disjoint regions may be done analytically, due to the choice of covariance matrix.  In particular, $\mbox{pr}(\beta_{\iota} \in R_{\iota}^{(d)} \mid \iota) = q_d$, so conditional on $\iota$, $q_1$ is the prior probability of a monotone function.

As a second choice of prior, we also consider the multivariate moment-based prior,
\begin{equation}
\label{eqn:regsplineMoMprior}
(\beta_{\iota} \mid \iota, \sigma^2) \sim C_{\iota} \sum_{d=1}^{2^p} q_d \ \beta_{\iota}^T \Sigma_{\iota}^{-1} \beta_{\iota} \ \vert \Sigma_{\iota} \vert^{-1/2} \exp \left\{  - \frac{1}{2} \beta_{\iota}^T \Sigma_{\iota}^{-1} \beta_{\iota} \right\} \, ,
\end{equation}
where $C_{\iota}$ is a known normalizing constant not depending on $\beta$, and where $\Sigma_{\iota} = c \sigma^2 {(L_i L_i^T)}^{-1}$ as before.  The integral of this density function over each of the sub-regions can be done analytically to give $\mbox{pr}(\beta_{\iota} \in R_{\iota}^{(d)} \mid \iota)$.  As in the parametric linear case considered by \citet{johnson:rossell:2010}, using the moment-based prior gives a test for monotonicity with different properties than the test obtained using a Gaussian prior.  These differences are illuminated in the simulation results reported in Section 5 of the main manuscript.

To complete the model for $f_m(x)$ we must also specify priors for $\alpha$, $\sigma^2$ and $\iota$.  The intercept $\alpha$ is given a vague mean-zero normal prior with variance $10^{10}$, and the variance is given a flat prior on $[0, 10^3]$.  Using vague priors on these parameters is acceptable, as they appear in all models under consideration, and there is no indeterminacy in the Bayes factor as a result.  The prior for the knot-inclusion indicator $\iota$ was discussed in the main manuscript; each element is given a Bernoulli prior with fixed probability $p_j$.

The prior probability of monotonicity may then be computed as
$$
\mbox{pr} \{ f'_m(x) \geq 0 \; \mbox{for all} \; x \in [0,1] \} = q_1 \left[ 1 - \mbox{pr}\{\iota=(0, \ldots, 0)\}  \right] + \mbox{pr}\{\iota=(0, \ldots, 0)\} \, ,
$$
since $q_1 = \mbox{pr} \{ f'_m(x) \geq 0 \; \mbox{for all} \; x \in [0,1] \mid \iota \}$, and since a flat function with $\iota = (0, \ldots, 0)$ is also a monotone function.

\subsection{Sampling scheme for posterior inference}

To construct our Markov-chain Monte Carlo sampling scheme, we use an alternative parametrization of a regression spline model proposed in \citet{shively:sager:walker:2009}.  For a given $\iota$, define $W_{\iota} = X_{\iota} L_{\iota}^{-1}$ and $\gamma_{\iota} = L_{\iota} \beta_{\iota}$, where $L_{\iota}$ is the constraint matrix defined previously, and where $X_{\iota}$ consists of the columns of $X$ corresponding to the nonzero entries in $\iota$.  This allows us to rewrite the function as $f_m(x) = \alpha 1 + W_{\iota} \gamma_{\iota}$, and to identify the regions $R_{\iota}^{(d)}$ as orthants in the transformed $\gamma$ space.  In particular, the region of monotonocity is the first orthant, where $\gamma_{\iota} \geq 0$.  This greatly simplifies the sampling scheme.

Under this new parametrization, the Gaussian-based prior becomes
$$
(\gamma_{\iota} \mid \iota, \sigma^2) \sim \sum_{d=1}^{2^p} q_d \ \mbox{TN} \left\{ \gamma_{\iota} \mid 0, c \sigma^2 I, R_\iota^{(d)} \right\} \, ,
$$
while the moment-based prior becomes
$$
(\gamma_{\iota} \mid \iota, \sigma^2) \sim  C_{\iota} \sum_{d=1}^{2^p} q_d \ \gamma_{\iota}^T  \gamma_{\iota} \exp \left\{  - \frac{1}{2c \sigma^2} \gamma_{\iota}^T \gamma_{\iota} \right\} \, .
$$

Sampling $\alpha$ and $\sigma^2$ is straightforward, so the details of these steps are omitted.  We now discuss the details for sampling $\iota$ and $\gamma_{\iota}$ under the multivariate method-of-moments prior, the details for the Gaussian-based prior being similar. 

Let $\iota_{j}$ denote the $j$th element of $\iota$, and $\iota_{-j}$ denote the remaining $j-1$ elements.  Similarly, let $\gamma_{j}$ denote the $j$th element of $\gamma$, and $\gamma_{-j}$ the remaining elements.  We sample each $(\iota_j, \gamma_j)$ jointly, given $(\iota_{-j}, \gamma_{-j})$, the data, and all other model parameters.  To keep notation simple, we let $\Theta$ denote the complete set of other model parameters, including the entries in $\iota$ and $\gamma$ not being sampled.

We generate $(\iota_j, \gamma_j \mid y, \Theta)$ by first generating $(\iota_j \mid y, \Theta)$ marginalizing over $\gamma_j$, and then generating $(\gamma_j \mid \iota_j, y, \Theta)$.

Without loss of generality assume that we are updating $\iota_1$.  To compute $p(\iota_1 = 0 \mid y, \Theta)$, observe that if at least one element of $\iota_{-1}$ is nonzero, then
\begin{equation}
\label{eqn:piota0}
p(\iota_1 = 0 \mid y, \Theta) = C p(y \mid \iota_1 = 0, \Theta) \ p(\gamma_{-1} \mid \iota_1 = 0, \iota_{-1}) \ p(\iota_1 = 0) \, ,
\end{equation}
where $C$ is a constant.
The second term on the right-hand side is
$$
p(\gamma_{-1} \mid \iota_1 = 0, \iota_{-1}) = \tilde{q} 2^{s}  (cs)^{-1} \gamma_{-1}^T \gamma_{-1} (2 \pi c)^{-s/2} \exp \left\{  -\frac{1}{2c} \gamma_{-1}^T \gamma_{-1} \right\} \, ,
$$
where $s = | \iota _{-1} |$ is the number of nonzero elements in $\iota_{-1}$; and where $\tilde{q} = q_1$ if all elements of $\gamma_{-1}$ are positive, and $(1-q_1)/(2^s - 1)$ otherwise.  If all elements of $\iota_{-1}$ are zero, then the same representation holds with $p(\gamma_{-1} \mid \iota_1 = 0, \iota_{-1}) $ set to one.

To compute $p(\iota_1 = 1 \mid y, \Theta)$, note that
\begin{equation}
\label{eqn:piota1}
p(\iota_1 = 1 \mid y, \Theta) = C \left\{ \int_{\mathbb{R}} p(y \mid \iota_1 = 1, \gamma_1, \Theta) \ p(\gamma_1, \gamma_{-1} \mid \iota_1 = 1, \iota_{-1} ) \ d \gamma_1 \right\} p(\iota_1 = 1) \, ,
\end{equation}
where $C$ is the same constant appearing in Equation (\ref{eqn:piota0}).  Let $\delta_1 = (y - \alpha 1 - W_{-1} \gamma_{-1}) - w_1 \gamma_1$, with $w_1$ representing the first column of $W_{1, \iota_{-1}}$, and $W_{-1}$ the remaining columns.  The first term in the integrand in (\ref{eqn:piota1}) may be written as
$$
p(y \mid \iota_1 = 1, \Theta) = (2 \pi \sigma^2)^{-n/2} \exp \left\{ - \frac{1}{2 \sigma^2} \delta_1^T \delta \right\} \, .
$$
Also, $p(\gamma_1, \gamma_{-1} \mid \iota_1 = 1, \iota_{-1} ) $ is given by
\begin{equation}
\label{tomsa2}
a(\gamma_1) \ 2^r (rc)^{-1} (\gamma_1^2 + \gamma_{-1}^T \gamma_{-1}) \ (2 \pi c)^{-r/2} \ \exp \left\{ -\frac{1}{2c} (\gamma_1^2 + \gamma_{-1}^T \gamma_{-1}) \right\} \, ,
\end{equation}
where $r = s + 1$, $a(\gamma_1) = \tilde{q}$ if $\gamma_1 > 0$, and $(1-q_1)/(2^r -1)$ if  $\gamma_1 < 0$.  Recall that  $\tilde{q} = q_1$ if all elements of $\gamma_{-1}$ are positive, and $(1-q_1)/(2^r - 1)$ otherwise.

Let $\tilde{y} = y - \alpha 1 - W_{-1} \gamma_{-1}$, and let
$$
d = 2^r (cr)^{-1} c^{-1/2} (2 \pi c)^{-s/2}  (2 \pi \sigma^2)^{-n/2} \exp \left( - \frac{1}{2c} \gamma_{-1}^T \gamma_{-1} \right) \left( \frac{w_1^T w_1}{\sigma^2} + c^{-1} \right)^{-1/2} p(\iota_1 = 1) \, . 
$$
Then (\ref{eqn:piota1}) can be written as
\begin{eqnarray}
p(\iota_1 = 1 \mid y, \Theta) &=& C d \exp \left[ - \frac{1}{2\sigma^2} \left\{ \frac{ \tilde{y}^T \tilde{y} - (w_t^T \tilde{y})^2} { w_1^T w_1 + \sigma^2/c} \right\} \right] \nonumber \\
&\times& \left\{ \frac{1-q_1}{2^r-1}
\int_{-\infty}^0 h(\gamma_1)  \ d \gamma_1
+ \tilde{q} \int_0^{\infty} h(\gamma_1) \ d \gamma_1 \right \} \, ,
\end{eqnarray}
where
\begin{eqnarray*}
h(\gamma_1) &=& (\gamma_1^2 + \gamma_{-1}^T \gamma_{-1})(2 \pi \tau^2_{\gamma})^{-1/2} \exp \left\{ - \frac{1}{2 \tau^2_{\gamma}}(\gamma_1 - \mu_{\gamma})^2  \right\} \\
\hat{\gamma}_1 &=& (w_1 ^T w_1)^{-1} w_1^T \tilde{y} \\
\mu_{\gamma} &=& (w_1^T w_1 + \sigma^2/c)^{-1} w_t^T \tilde{y} \\
\tau^2_{\gamma} &=& \sigma^2 (w_1^T w_1 + \sigma^2/c)^{-1} \, .
\end{eqnarray*}
The two integrals with respect to $\gamma_1$ may be done analytically.

If $\iota_1 = 0$, then $\gamma_1$ need not be generated.  If $\iota_1 = 1$, then
$$
p(\gamma_1 \mid y, \iota_1 = 1, \Theta) \propto p(y \mid \iota_1 = 1, \gamma_1, \Theta) \ p(\gamma_1, \gamma_{-1} \mid \iota) \, ,
$$
with $p(\gamma_1, \gamma_{-1} \mid \iota)$ already given in (\ref{tomsa2}).  This is a mixture distribution that can be sampled by an efficient accept/reject algorithm.

The details for the Gaussian prior for very similar, with the modification that the term $\gamma_1^2 + \gamma_{-1}^T \gamma_{-1}$ does not appear in front of the normal kernel, and that drawing from the mixture distribution above may be done using constrained Gaussian distributions, without appealing to an accept/reject algorithm.

\section{Proof of Theorem 2}

\subsection{Notation and preliminary lemma}

Recall that the two models under comparison are $M_1$, a normal model with monotone mean
function, 
$$M_1=\{N(y \mid f_1(x),\sigma_1^2),\,\pi_1(f_1,\sigma_1^2)\} \, ,$$ 
and $M_2$, a normal model with non--monotone mean function,
$$M_2=\{N(y \mid f_2(x),\sigma_2^2),\,\pi_2(f_2,\sigma_2^2)\} \, ,$$
where $\pi_1$ and $\pi_2$ are the priors over the monotone and non-monotone functions, respectively.  We assume the $x_i$ are sampled from distribution $Q$ with support on $(0,1)$.

Let $f$ be a regression spline function, which can be monotone or non-monotone.  In the main paper, we defined the sieve $S_n$ as
$$S_n=\left\{f \in \mathcal{F}: f \,\,\mbox{has  knot points at the} \, \,  x_i \,\,\mbox{and}\,\int |f'(x)|^2\,d x <\lambda_n\right\}$$
for some $\lambda_n\uparrow+\infty$, where $\mathcal{F}$ is the space of continuous functions on $(0,1)$ with piecewise-continuous derivatives.  In addition, we must also define the sieve 
$$S_n'=\left\{f \in \mathcal{F}': f \,\,\mbox{has  knot points at the}\,(x_i)\,\,\mbox{and}\,\int |f'(x)|^2\,d x <\lambda_n\right\}$$
for some $\lambda_n\uparrow+\infty$, where $\mathcal{F}'$ is the space of continuous, piecewise linear functions on $(0,1)$.  Clearly $S_n' \subset S_n$, and we assume that all functions are defined on $S_n$.  But on pages 10 and 11 and in Theorem 2 of a Brown University technical report in 1981 by S.~Geman on sieves for nonparametric estimation of densities and regressions, it is shown that $\widehat{f}(x)$, the minimizer of
$$\sum_{i=1}^n \{y_i-f(x_i)\}^2,$$
subject to $f\in S_n$, is piecewise linear, with knots at the $(x_i)$.  Therefore, in what follows, all maximum-likelihood estimates are assumed without loss of generality to be restricted to $S_n'$, even though the priors have support over the larger sieve.

Let $f_0$ be the true unknown function.   Let $\widehat{f}_1$ be the maximum likelihood estimate of $f_0$ over $\mathcal{F}_1$, and $\widehat{f}_2$ the maximum-likelihood estimate of $f_0$ over $\mathcal{F}_2$, both restricted to the sieve $S_n'$.  We write $\widehat{f}$ to denote the unrestricted maximum-likelihood estimate over $S_n'$.  We start with a Lemma which provides a useful result for both $f_0\in {\cal F}_1$ and $f_0\in {\cal F}_2$.

\vspace{0.2in}
\noindent
{\sc Lemma 1.} Let $\widehat{f}(x)$ be the maximum likelihood estimator of the regression spline, including both monotone and non-monotone parts of $S_n'$.  If
\begin{equation}
n^{-1}\sum_{i=1}^n \{\widehat{f}(x_i)-f_0(x_i)\}^2\rightarrow 0\quad\mbox{almost surely}, \label{eq3}
\end{equation}
then 
$$\lim\inf_n\,\inf_{f_1 \in \mathcal{F}_1} n^{-1}\sum_{i=1}^n (y_i-f_1(x_i))^2\geq \sigma_0^2\,\,\mbox{almost surely} \, ,$$
and
$$\lim\inf_n\,\inf_{f_2 \in \mathcal{F}_2} n^{-1}\sum_{i=1}^n (y_i-f_2(x_i))^2\geq \sigma_0^2\,\,\mbox{almost surely} \, .$$

\begin{proof}
If (\ref{eq3}) holds, then from the triangular inequality, each of the following holds almost surely under $f_0$.  First,
$$n^{-1}\sum_{i=1}^n (y_i-f_0(x_i))^2\leq n^{-1}\sum_{i=1}^n (y_i-\widehat{f}(x_i))^2+n^{-1}\sum_{i=1}^n (\widehat{f}(x_i)-f_0(x_i))^2 \quad \mbox{a.s.}$$
The last term goes to 0, and so in the limit,
$$n^{-1}\sum_{i=1}^n (y_i-f_0(x_i))^2\leq n^{-1}\sum_{i=1}^n (y_i-\widehat{f}(x_i))^2 \, . $$
But from the definition of the maximum likelihood estimator,
$$n^{-1}\sum_{i=1}^n (y_i-\widehat{f}(x_i))^2\leq n^{-1}\sum_{i=1}^n (y_i-f_0(x_i))^2 \, ,$$
and hence, as
$$n^{-1}\sum_{i=1}^n (y_i-f_0(x_i))^2\rightarrow \sigma_0^2 \, ,$$
it follows that
$$n^{-1}\sum_{i=1}^n (y_i-\widehat{f}(x_i))^2\rightarrow \sigma_0^2 \, .$$
To elaborate on this, let 
$$a_n=n^{-1}\sum_{i=1}^n (y_i-f_0(x_i))^2,\quad b_n=n^{-1}\sum_{i=1}^n (y_i-\widehat{f}(x_i))^2\quad\mbox{and}\quad c_n=n^{-1}\sum_{i=1}^n (\widehat{f}(x_i)-f_0(x_i))^2.$$
So
$$a_n-c_n\leq b_n\leq a_n$$
and $c_n\rightarrow 0$ and $a_n\rightarrow\sigma_0^2$; so also $b_n\rightarrow\sigma^2_0$.

Then, if $f_0$ is monotone, we have by definition
$$n^{-1}\sum_{i=1}^n (y_i-\widehat{f}(x_i))^2\leq n^{-1}\sum_{i=1}^n (y_i-\widehat{f}_2(x_i))^2,$$
and, if $f_0$ is non-montone, we have also by definition
$$n^{-1}\sum_{i=1}^n (y_i-\widehat{f}(x_i))^2\leq n^{-1}\sum_{i=1}^n (y_i-\widehat{f}_1(x_i))^2,$$
completing the proof.
\end{proof}

\subsection{Proof of theorem}
Conditions under which (\ref{eq3}) holds are to be found, for example, in \citet{geman:hwang:1982}, using sieves.  These are easily verified under the assumptions of our theorem.  To see this, note that if 
$$\int \exp(t|y|)\,F_Y(dy)<\infty$$
for some $t>0$, which we assume to be the case, then the condition
$\lambda_n=O(n^{1/4-\delta})$ for some $\delta>0$ is sufficient to ensure that
$$\int_0^1 \left\{ \widehat{f}(x)-f_0(x)\right\}^2\,Q(dx)\rightarrow 0\quad\mbox{almost surely.} $$
It is then easy to show, assuming $f_0$ is bounded,  that
$$\int_0^1 \left\{ \widehat{f}(x)-f_0(x)\right\}^2\,Q_n(dx)\rightarrow 0\quad\mbox{almost surely,}$$
where $Q_n$ is the empirical distribution of the $(x_i)$. Hence, (\ref{eq3}) follows.

Having established that (\ref{eq3}) holds, we now take the two cases in turn: first where $f_0$ is non-monotone, and then where $f_0$ is monotone.
\paragraph{$f_0$ is non-monotone.}
Let us first assume that $f_0(x)$, the true mean function on $\mathbb{X}$, is non--monotone. That is, the true density is $g_0(y \mid x,\theta_0)=N(y \mid f_0(x),\sigma_0^2)$.
So consider
$$B_{12}=\frac{\int \prod_{i=1}^n N(y_i \mid f_1(x_i),\sigma_1^2)\,\pi_1(d\theta_1)  }{\int \prod_{i=1}^n N(y_i \mid f_2(x_i),\sigma_2^2)\,\pi_2(d\theta_2)  }.$$
The denominator, with the additional factor involving the true model, is given by
$$I_{n2}=\int \prod_{i=1}^n \frac{N(y_i \mid f_2(x_i),\sigma_2^2)}{N(y_i \mid f_0(x_i),\sigma_0^2)}\,\pi_2(d\theta_2) . $$
The theorem assumes that the prior for $(f_2,\sigma_2^2)$ has $f_0$ in its Kullback--Leibler support:
$$\pi_2 \left[ (f_2,\sigma_2): \int d_K\{ N(\cdot|f_2(x),\sigma_2^2),N(\cdot|f_0(x),\sigma_0^2) \} \,Q(d x)<\epsilon \right]>0$$
for all $\epsilon>0$.  Under this condition, $I_{n2}>e^{-n\tau}$ almost surely for all large $n$, for any $\tau>0$.

The numerator can be written as
$$I_{n1}=\int \prod_{i=1}^n \frac{N(y_i \mid f_1(x_i),\sigma_1^2)}{N(y_i \mid f_0(x_i),\sigma_0^2)}\,\pi_1(d\theta_1) $$
and $f_1$ belongs to the set of monotone functions.
We need to show that $I_{n1}<e^{-n\delta}$ almost surely for all large $n$ for some $\delta>0$.  This can be shown with the following two conditions.

\begin{enumerate}
\item The follow bound holds almost surely under $f_0$:
$$\lim\inf_n\,\inf_{f_1 \in \mathcal{F}_1} n^{-1}\sum_{i=1}^n \{y_i-f_1(x_i)\}^2\geq \sigma_0^2 \, .$$
\item If the $x$ are sampled from $Q$ then, for some constant $\psi>0$,
$$\sup_{f_1} \int \exp\left\{-\psi(f_0(x)-f_1(x))^2\right\}\,Q(d x)<1.$$
\end{enumerate}
Under the assumptions of the theorem, the first condition is ensured by Lemma 1, and the second condition holds because $f_0(x)$ is a fixed non--monotone function, and it is not possible for a monotone function to get arbitrarily close to it.

We use these conditions to establish the result for $f_0$ non-monotone as follows.  First,
$$I_{n1}\leq \left\{\sup_{f_1,\sigma_1^2}\prod_{i=1}^n \frac{N(y_i|f_1(x_i),\sigma_1^2)}{N(y_i|f_0(x_i),\sigma_0^2)}\right\}^{1/2}\,
\int \left\{\prod_{i=1}^n\frac{N(y_i|f_1(x_i),\sigma_1^2)}{N(y_i|f_0(x_i),\sigma_0^2)}\right\}^{1/2}\,\pi_1(d\theta_2).$$
Write these two terms as $K_{n1}$ and $J_{n1}$, respectively.

If we let $\widehat{f}_1(x_i)$ be the minimizer of
$$\sum_{i=1}^n \{y_i-f_1(x_i)\}^2,$$
then the appropriate $\widehat{\sigma_1^2}$ is given by
$$n^{-1}\sum_{i=1}^n\{y_i-\widehat{f}_1(x_i)\}^2.$$
Hence,
$$K_{n1}=\left( \frac{\sigma_0^2}{n^{-1}\sum_{i=1}^n(y_i-\widehat{f}_1(x_i))^2 } \right)^{n/2}\,\exp\left\{-\half n+\half n\,n^{-1}\sum_{i=1}^n(y_i-f_0(x_i))^2/\sigma_0^2\right\}.$$
Since
$$n^{-1}\sum_{i=1}^n (y_i-f_0(x_i))^2/\sigma_0^2\rightarrow 1\quad\mbox{almost surely},$$
and using condition 1, it follows that $K_{n1}<e^{n\eta}$ almost surely for all large $n$, for any $\eta>0$.

On the other hand, the expectation of $J_{n1}$ is given by
$$E (J_{n1})=\int \left[ 1-\half \int d_H^2\{N(\cdot \mid f_1(x),\sigma_1^2),N(\cdot \mid f_0(x),\sigma_0^2)\}\,Q(d x) \right] ^n\,\pi_1(d\theta_1).$$
Now
$$\begin{array}{ll}
\half d_H^2\{ N(\cdot \mid f_1(x), \sigma_1^2), N(\cdot \mid f_0(x),\sigma_0^2)\}  & =1-\sqrt{\frac{2\sigma_0\sigma_1}{\sigma_1^2+\sigma_0^2}}
\exp\left[-\frac{\{f_0(x)-f_1(x)\}^2}{4(\sigma_0^2+\sigma_1^2)} \right] \\ \\
& \geq 1-\exp\left[-\psi\,\{f_0(x)-f_1(x)\}^2 \right]\end{array}$$
for some constant $\psi>0$, if we impose an upper bound on the prior for $\sigma_1^2$.

Therefore, with condition 2, $E(J_{n1})<e^{-n\kappa}$, for some $\kappa>0$, and hence $J_{n1}<e^{-n\phi}$ almost surely for all large $n$, for some $\phi>0$. This result follows from an application of the 
Markov inequality and the Borel-Cantelli Lemma.  We have
$$P(J_{n1}>e^{-n\phi})<e^{n\phi}e^{-n\kappa}$$
and so for $\kappa>\phi$,
$$\sum_n P(J_{n1}>e^{-n\phi})<\infty.$$
From Borel-Cantelli, this implies that
$$J_{n1}<e^{-n\phi}$$
almost surely for all sufficiently large $n$.
Hence, $I_{n1}<e^{n\eta}e^{-n\phi}$ almost surely for all large $n$ for any $\eta>0$ and for some $\phi>0$, yielding $I_{n1}<e^{-n\delta}$ almost surely for all large $n$ for some $\delta>0$. Putting all this together, we have
$$B_{12}\leq \exp\{n(-\delta+\tau)\}$$
almost surely for all large $n$, for any $\tau>0$, and for some $\delta>0$. So choose $\tau<\delta$ to get the required result
that $B_{12}\rightarrow 0$ almost surely as $n\rightarrow \infty$.

\paragraph{$f_0$ monotone.}
Now let us consider the reverse case when $f_0$ is a fixed monotone function.  As before, we reason from two conditions, which are the mirror of conditions 1 and 2 above.
\begin{description}

\item 3. The following bound holds almost surely under the true model:
$$\lim\inf_n\,\inf_{f_2 \in \mathcal{F}_2} n^{-1}\sum_{i=1}^n \{y_i-f_2(x_i)\}^2\geq \sigma_0^2 \, .$$

\item 4. For some constant $\psi>0$ and $\rho_0<1$,
\begin{equation}
\int \left[\int \exp\{-\psi(f(x)-f_0(x))^2\}\,Q(d x)\right]^n\,\pi_2(d f)<\rho_0^n. \label{eq1}
\end{equation}
for all sufficiently large $n$.
\end{description}

Condition 3 follows from Lemma 1.  To understand condition 4, let $f_2$ be any element of $\mathcal{F}_2$.  It cannot be the case that at all points $f_2$ is arbitrarily close to $f_0$, as $f_2$ is not monotone.  Yet $f_2$ can be arbitrarily close to $f_0$ for most points and differ only in one interval of the sieve, which is the most difficult case to detect.  Condition 4 is necessary to guarantee sufficient average separation of $f_2$ and $f_0$, under the prior $\pi_2(f_2)$.

We therefore now rely on establishing Condition 4 by showing that a suitably constructed prior will not put sufficiently large mass on $f_2$ being arbitrarily close to any fixed function, such as $f_0$. We will then go on to use Condition 4 to prove the result.  To this end, we  can split the outer integral into two parts: one with
$$\int \exp\{-\psi(f(x)-f_0(x))^2\}\,Q(d x)<1-\epsilon \, ,$$
and the other with
$$\int \exp\{-\psi(f(x)-f_0(x))^2\}\,Q(d x)>1-\epsilon$$
for a sufficiently small $\epsilon$.
The case for the inner integral being bounded above by $1-\epsilon$ is clear, so we need to  show that
\begin{equation}
\pi_2\left\{f_2: \int \exp\left\{-\psi(f_0(x)-f_2(x))^2\right\}\,Q(d x)>1-\epsilon\right\}<\rho^n, \label{eq2}
\end{equation}
for some $\rho<1$.
Now (\ref{eq2}) is of the form
\begin{equation}
\pi^{(n)} \left\{ d(f,{\cal F}_0)<\epsilon \mid f\notin {\cal F}_0 \right\} <\rho^n \, ,\label{cond}
\end{equation}
where
$$d(f,{\cal F}_0)=\inf_{f_0\in {\cal F}_0}\{d(f,f_0)\}$$
and
$$d(f,f_0)=1- \int\exp\left\{-\psi(f_0(x)-f_2(x))^2\right\}\,Q(d x).$$
This border region between ${\cal F}_1$ and ${\cal F}_2$ is known to be problematic and putting exponentially small mass there is one solution. For example, Salomond solves this problem by modifying the test to
$$H_0: \tilde{d}(f,{\cal F}_0)<\epsilon\quad\mbox{vs}\quad H_1:\tilde{d}(f,{\cal F}_0)>\epsilon$$
for some alternative distance $\tilde{d}$.

In a simple example, suppose we have $f(x)=\beta x$; then we can ensure (\ref{eq2}) by taking
$$\pi^{(n)}\big(\beta\in(-\delta,0)\big)<\rho^n$$
for some $\delta>0$.

To investigate (\ref{eq2}) a little further, 
suppose we have $Q(d x)$ which takes $n$ equi--spaced samples $(x_1,\ldots,x_n)$ and, for each $f_2$, define
$k$ as the number of points for which $|f_2(x_i)-f_0(x_i)|<\delta$ for some arbitrarily small $\delta>0$. The condition now translates to
$$\pi_2(k/n>1-\delta^*)<\rho^n$$
for some $\delta^*>0$. To see this, and letting $z_i=\psi (f_0(x_i)-f_2(x_i))^2$, we need to consider
$$n^{-1}\sum_{i=1}^n e^{-z_i}>1-\epsilon.$$
Putting $\phi_i=1-e^{-z_i}$, we need to consider
$$n^{-1}\sum_{i=1}^n \phi_i<\epsilon \, ,$$
and $z_i<\delta$ implies $\phi_i<\delta$. Now putting $\delta=M\epsilon$, for some $M>1$, then we need at least $k$ of the $\phi_i$ to be less than $\delta$ where
$$n^{-1}(n-k)M\epsilon<\epsilon.$$
Thus, $k>n(1-1/M)$, i.e. $1/M=\delta^*$.

Hence, the prior $\pi_2$ can not put a function $f_2$  too close to any monotone function, including $f_0$, for a proportion $n(1-1/M)$ of the $n$ points, with mass larger than $\rho^n$ for some $\rho<1$.
For each $i$, consider the distribution of $f_2(x_i)$ given $\{f_2(x_{i-1}),\ldots,f_2(x_{1})\}$. From the prior construction we can ensure there is a maximum value for 
$$\pi_2 \left\{  (|f_2(x_i)-f_0(x_i)|<\delta  \mid f_2(x_{i-1}),\ldots,f_2(x_1) \right\} ,$$
i.e. this probablity has an upper bound, say $\rho_1<1$. This means that with the same positive, but arbitrarily small probability, $|f_2(x_i)-f_2(x_{i-1})|$ is larger than $2\delta$ for all $i$. 

Hence,
$$P \left\{ k>n(1-\delta^*) \right\} \leq P \left\{ k_u>n(1-\delta^*) \right\}$$
where $k_u$ is binomial with parameters $\rho_1$ and $n$.  This is because there are $n+1$ intervals, and the function only needs to be decreasing on one of them to be a non-monotone function. Here, $P\left\{ k_u>n(1-\delta^*) \right\}$ is simply introduced as an upper bound.

Therefore,  using a normal approximation to the binomial,
$$k_u/n\approx N\{ \rho_1,\rho_1(1-\rho_1)/n \} \, ,$$
and hence
$$P\{ k>n(1-\delta^*) \} <P\{ k_u>n(1-\delta^*)\} \approx P\left(z>\xi\sqrt{n}\right)$$
for some $\xi>0$, and where $z$ is a standard normal random variable.  Using the asymptotic expression for the survival function of the normal, we have
$$P\left\{ k_u>n(1-\delta^*) \right\} \approx \frac{c_1}{\sqrt{n}}\exp(-c_2 n)$$
for positive constants $c_1$ and $c_2$. Hence, (\ref{eq2}) holds true, and so (\ref{eq1}) holds true.

We could modify the priors described in the paper to adhere to (\ref{cond}).  However, in practice it would make no discernible difference, since $\rho, \psi$ and $\delta$ are all arbitrary constants and can be taken arbitrarily small or large as required.

\end{spacing}

\singlespace
\begin{footnotesize}
\bibliographystyle{abbrvnat}
\bibliography{masterbib,business_articles}
\end{footnotesize}

\end{document}